\providecommand{\U}[1]{\protect\rule{.1in}{.1in}}
\providecommand{\U}[1]{\protect\rule{.1in}{.1in}}
\newtheorem{theorem}{Theorem}
\theoremstyle{plain}
\newtheorem{acknowledgement}{Acknowledgement}
\numberwithin{equation}{section}
\DeclareMathOperator{\sech}{sech}
\DeclareMathOperator{\gd}{gd}
\begin{document}
\title[Generalized Coupled NLS System]{On explicit soliton solutions and blow-up for
Coupled variable coefficient nonlinear Schr\"{o}dinger equations}
\author{José M. Escorcia }
\address{Escuela de Ciencias Aplicadas e Ingenier\'ia, Universidad EAFIT, Carrera 49 No. 7 Sur-50, Medellín 050022, Colombia.}
\email{jmescorcit@eafit.edu.co}
\author{Erwin Suazo}
\address{School of Mathematical and Statistical Sciences, University of
Texas Rio Grande Valley, 1201 W. University Drive, Edinburg, Texas,
78539-2999.}
\email{erwin.suazo@utrgv.edu}
\date{\today }

\subjclass{Primary 81Q05, 35C05. Secondary 42A38}

\begin{abstract}
This work is concerned with the study of explicit solutions for a generalized coupled nonlinear Schr\"{o}dinger equations (NLS) system with variable coefficients. Indeed, we show, employing similarity transformations, the existence of Rogue wave and dark-bright soliton like-solutions for such a generalized NLS system, provided the coefficients satisfy a Riccati system. As a result of the multiparameter solution of the Riccati system, the nonlinear dynamics of the solution can be controlled. Finite-time singular solutions in the $L^{\infty}$ norm  for the generalized coupled NLS system are presented explicitly. Finally, an n-dimensional transformation between a variable coefficient NLS coupled system and a constant coupled system coefficient is presented. Soliton and Rogue wave solutions for this high-dimensional system are presented as well. A Mathematica file has been prepared as supplementary
material, verifying the Riccati systems used in the construction of the solutions \\

\textbf{Keywords:} 
Coupled Nonlinear Schr$\ddot{\mbox{o}}$dinger
Equations, Soliton solution, Rogue wave solution, Blowup solution, Similarity transformations and Riccati systems.

\end{abstract}

\maketitle

\section{Introduction}


The coupled nonlinear Schr$\ddot{\mbox{o}}$dinger equation models several real phenomena such as the interaction of Bloch-wave packets in a periodic system \cite{Shi}, the evolution of two surface wave packets in deep water, and in wavelength-division-multiplexed systems \cite{Menyuk}. The starting point of many studies of these kinds of models was  the well-known  Manakov system  introduced in 1974  \cite{Manakov1974} (see Equations (\ref{ManakovSystem1})-(\ref{ManakovSystem2})). The Manakov system  is an integrable coupled NLS  characterized by equal nonlinear interactions within and between  components \cite{ZAKHAROV1982}.
A great effort has been made  to figure out  the dynamics of the soliton-like solutions for coupled NLS. Some of these structures are  dark-bright (DB), bright-bright (BB), and dark-dark (DD) solitons. These states have been observed in several experiments in optics \cite{Zhigang1997,Ostrovskaya1999} and in the context of Bose-Einstein condensates \cite{Busch2001}. Another important class of solutions is  Rogue waves. These types of solutions are strong wavelets that can appear in the ocean from nowhere and disappear without a trace. In fact, it is this particular property that makes them candidates for the study of predicting catastrophic phenomena such as tsunamis, thunderstorms, and earthquakes \cite{Akhmediev}.

In order to describe more realistically  some phenomena dynamics, the space-time variation of the coefficients must be taken into account in the majority of cases. In this sense, new nonlinear Schr$\ddot{\mbox{o}}$dinger-type coupled systems with variable coefficients have been introduced \cite{Du,Rehab,Han,Kevrekidis,Xiaoyan,Manganaro,Muslum,Arash,Qiu,Yu}. The difficulty of obtaining  non-constant coefficients in the construction of  explicit solutions  can be overcome by  making use of different techniques such as the Hirota bilinear method \cite{Chakraborty,Zhang}, similarity transformations \cite{Han,Manganaro,Yu},  Darboux transformations \cite{Chai,Du} and Bäcklund transformation \cite{Chai15}, to mention a few.

During recent year and due to many applications, there has been a growing interest in the study of solitons and Rogue wave type solutions for coupled variable coefficient NLS equations \cite{Rehab,Xiaoyan,Muslum,Arash,Qiu}. At present, the field interested in these kinds of solution and moved by the necessity of data transfer are optical fiber technology and soliton transmission. More precisely, these special solutions  can overcome some difficulties related to physical properties of the fiber and the interaction of it and the transmitted soliton. Although some important advances have been made in understanding the dynamics of soliton solutions for generalized coupled NLS systems from the mathematical and physical point of view, it remains a difficult problem. In this context and under the revolution of hardware equipment and software technology, recent studies of these fascinating solutions have been carried out by using a deep learning approach as well \cite{Juncai,Raissi}.

In this work, the objective will be to study the Schr$\ddot{\mbox{o}}$dinger-type coupled system with quadratic Hamiltonian (VCNLS hereafter):
{\small 
\begin{eqnarray}
\label{VCNL1}
i\psi _{t} &=&-a(t) \psi_{xx} +\left(b(t)x^2-id(t)-xf(t)\right)\psi +i\left(g(t)-c(t)x\right)\psi_{x} +h(t)(\left\vert \varphi \right\vert ^{2s}+\left\vert \psi \right\vert
^{2s})\psi,  \\
i\varphi_{t} &=&-a(t) \varphi_{xx} +\left(b(t)x^2-id(t)-xf(t)\right)\varphi +i\left(g(t)-c(t)x\right)\varphi_{x} +h(t)(\left\vert \varphi \right\vert ^{2s}+\left\vert \psi \right\vert
^{2s})\varphi.  \label{VCNL2}
\end{eqnarray}
}Here $\psi(x,t)$ and $\varphi(x,t)$  represent  complex wavefunctions, $a(t), b(t), c(t), d(t), f(t), g(t), h(t)$ are given time-dependent functions and $s>0.$ The coefficient $a(t)$ is the dispersion management parameter, $b(t)$ determines the time-dependence of the quadratic potential, $d(t)$ is the gain/loss term, $f(t)x$ standardizes an arbitrary  linear potential and $h(t)$ is the nonlinear management. The system (\ref{VCNL1})-(\ref{VCNL2}) appears to describe the propagation of solitons in birefringence fibers, see for example \cite{Han,Yu}, and also, in the framework of solitary waves in multi-component Bose-Einstein condensates \cite{Ho,Timmermans}. We point out that the proposed system  (\ref{VCNL1})-(\ref{VCNL2}) extends many of the models studied in the literature. 

The uncoupled case ($\varphi \equiv 0$), has been the source of many studies in the last years \cite{Acosta-Humanez,Suazo16,CorderoSoto2008,Escorcia,Suazo18,Suazo}. By 
employing the similarity transformation and imposing the coefficients  satisfy a  Riccati system (see Equations (\ref{rica1})-(\ref{rica6}) and (\ref{Rica1})-(\ref{Rica6})),  soliton-like solutions (bright, dark, and Peregrine), singular solutions, and the fundamental solution ($h(t)\equiv 0$) were  constructed.
Also, the Riccati system allowed  the construction and study of explicit solutions for diffusion-type equations \cite{Suazo13} and reaction–diffusion equations \cite{Pereira}.

The paper's aim is to establish a relationship between dispersion, damping, and nonlinearity terms by the Riccati system in order to obtain  solutions such as Dark-Bright and Rogue wave soliton solutions for the coupled VCNLS (\ref{VCNL1})-(\ref{VCNL2}). The multiparameters solution of the Riccati system allows controlling the soliton dynamics; this is crucial in practice, because one needs to control the soliton transmission of information since standard vector solitons for  NLS systems don't have controllable parameters. This relation in the coefficients provides the construction of solutions with blowup in finite time  in the $L^{\infty}$ norm. Finally, an $n-$dimensional generalization of the VCNLS is studied  and the existence of soliton solutions is shown as well.

The paper is organized as follows: In Section 2 we introduce the classical coupled  Manakov system and its Rogue wave and soliton type solutions. The explicit construction of Rogue wave and soliton type-solutions for the generalized coupled NLS system (\ref{VCNL1})-(\ref{VCNL2}) is presented in Section 3. This objective is obtained by using  Theorem \ref{Th1} and solutions of the Manakov system given in the previous section. In Section 4, we construct explicit $L^{\infty}$ singular solutions for the  generalized coupled NLS system, as can be seen in Theorem \ref{Theo2}. Section 5 is devoted to the extension of the transformation established in Theorem \ref{Th1}.  Such a new transformation allows us to connect an n-dimensional coupled NLS system with a n-dimensional Manakov system, see Theorem 3. Also, soliton and Rogue wave solutions are constructed for this model. A Mathematica file has been prepared as supplementary material, verifying the Riccati systems used in the construction of the solutions in all the previous sections. Conclusions and some final remarks are given in Section 6. Section 7 corresponds to the appendix, in which the solution of the Riccati system is presented.

\section{Classical solutions of coupled NLS}

The coupled NLS system, or Manakov system 
\begin{eqnarray}
\label{ManakovSystem1}
i\chi _{\tau }+\chi _{\xi \xi }+2(\left\vert \phi \right\vert
^{2}+\left\vert \chi \right\vert ^{2})\chi &=&0, \\
\label{ManakovSystem2}
i\phi _{\tau }+\phi _{\xi \xi }+2(\left\vert \phi \right\vert
^{2}+\left\vert \chi \right\vert ^{2})\phi &=&0,
\end{eqnarray}%
is a nonlinear wave system that governs a wide variety of physical phenomena ranging from the dynamics of two-component Bose-Einstein condensate \cite{Busch2001,Kevrekidis08}, nonlinear optics \cite{Zhigang1997,Ostrovskaya1999}, and the dynamics of deep-sea waves under  the interaction of two-layer stratification  \cite{Yilmaz}.  This system (\ref{ManakovSystem1})-(\ref{ManakovSystem2}) was shown by Manakov to be integrable \cite{Manakov1974,ZAKHAROV1982}. It admits the  soliton-like structures such as dark-bright (DB) solutions. In fact, (\ref{ManakovSystem1})-(\ref{ManakovSystem2})  possesses the DB solitons, see for example \cite{Busch2001,Zhigang1997,Yu}:
\begin{equation}
\chi (\xi ,\tau )= C\tanh\left[C(\xi-2e\tau) + E \right]\exp\left[i(e\xi + (2C^2 -e^2)\tau)\right]   , \label{Dark_Brighta}
\end{equation}

\begin{equation}
\phi(\xi ,\tau )= -4C^2(a_3 + ib_3)\sech \left[C(\xi -2e \tau) + E \right]\exp\left[i\left(e \xi + (3C^2 -e^2)\tau\right) \right],\label{Dark_Brightb}
\end{equation}

\noindent where $E = \frac{1}{2}\ln\left( \frac{a_{3}^{2} + b_{3}^{2}}{2C^2}\right)$ and $C,e,a_3,b_3$ are arbitrary constants. These kinds  of solutions were first experimentally observed  in photorefractive crystals \cite{Zhigang1997}. 
Other interesting solutions with fascinating properties are the Rogue waves. Indeed, the coupled NLS system (\ref{ManakovSystem1})-(\ref{ManakovSystem2})  admits the exact  Rogue wave solution of type I \cite{BoLi,Yu}:\bigskip 
\begin{equation}
\chi (\xi ,\tau )=A \exp (i\theta _{1})\left( -1-i\sqrt{3}+\frac{%
-6AB \sqrt{3}-36\tau A^{2}\sqrt{3}-3+i(36A^{2}\tau
+6AB  +5\sqrt{3})}{12A^{2}B^{2}+8AB \sqrt{3}%
+144\tau ^{2}A^{4}+5}\right) , \label{RWIa}
\end{equation}

\begin{equation}
\phi(\xi ,\tau )=A \exp (i\theta_{2})\left( -1+i\sqrt{3}+\frac{%
-6A B \sqrt{3}+36\tau A^{2}\sqrt{3}-3+i(36A^{2}\tau
-6AB  -5\sqrt{3})}{12A^{2}B^{2}+8AB \sqrt{3}%
+144\tau ^{2}A^{4}+5}\right) ,\label{RWIb}
\end{equation}
where $B =\xi +6q\tau ,$ $\theta _{i}=d_{i}\xi
+(2c_{1}^{2}+2c_{2}^{2}-d_{i}^{2})\tau $ with $i = 1,2.$ The parameters $A =d_{2}+3q,$ $d_{1}=d_{2}-2A ,$ $c_{i}=\pm 2A ,$ with $i = 1,2$ and $d_{2},q$ are
arbitrary constants. Similarly, the  Rogue wave solution of type II \cite{BoLi,Yu} is given by:

\begin{equation}
\chi (\xi ,\tau )=A \left( -1-i\sqrt{3}+\frac{G_{1}+iH_{1}}{D}\right) \exp
(i\theta _{1}), \label{RWIIa}
\end{equation}

\begin{equation}
\phi(\xi ,\tau )=A\left( -1+i\sqrt{3}+\frac{G_{2}+iH_{2}}{D}\right) \exp
(i\theta _{2}),\label{RWIIb}
\end{equation}

\noindent where {\tiny 
\begin{eqnarray*}
D &=&1+4\sqrt{3}AB +24A^{2}B^{2}+16\sqrt{3}A^{3}B^{3}+12A^{4}B^{4}+48A^{4}(9+8\sqrt{3}A
B +6A^{2}B^{2})\tau ^{2}+1728A^{8}\tau ^{4}, \\
G_{1} &=&-3\left( -1+6A^{2}B^{2}+4\sqrt{3}A^{3}B
^{3}+4A ^{2}(\sqrt{3}+12A B +6\sqrt{3}A ^{2}B
^{2})\tau +24A ^{4}(3+2\sqrt{3}A B )\tau ^{2}+288\sqrt{3}%
A ^{6}\tau ^{3}\right) , \\
G_{2} &=&3\left( 1-6A ^{2}B^{2}-4\sqrt{3}A ^{3}B
^{3}+4A ^{2}(\sqrt{3}+12A B +6\sqrt{3}A ^{2}B
^{2})\tau -24A ^{4}(3+2\sqrt{3}A B )\tau ^{2}+288\sqrt{3}%
A ^{6}\tau ^{3}\right) , \\
H_{1} &=&\sqrt{3}+12A B +18\sqrt{3}A ^{2}B ^{2}+12A
^{3}B ^{3}+12A ^{2}(9+8\sqrt{3}A B +6A ^{2}B
^{2})\tau +24A ^{4}(13\sqrt{3}+6A B )\tau ^{2}+864A
^{6}\tau ^{3}, \\
H_{2} &=&-\sqrt{3}-12A B -18\sqrt{3}A ^{2}B
^{2}-12A ^{3}B ^{3}+12A ^{2}(9+8\sqrt{3}A B
+6A ^{2}B ^{2})\tau -24A ^{4}(13\sqrt{3}+6A B )\tau
^{2}+864A ^{6}\tau ^{3}.
\end{eqnarray*}
}

In the subsequent  section, we will use these solutions to construct explicit solutions for the generalized coupled NLS  system.


\section{Generalized Coupled NLS System}

The aim of this section is to establish the existence of soliton-type and Rogue wave solutions (by the explicit construction of these solutions) of the generalized coupled NLS system (\ref{VCNL1})-(\ref{VCNL2}). More precisely, by using similarity transformations and by imposing the coefficients to satisfy the Riccati system, classical solutions of the standard Manakov system (\ref{ManakovSystem1})-(\ref{ManakovSystem2}) are extended to this general framework. Indeed, the main result of this section is:

\begin{theorem}\label{Th1}
Let $l_{0}=\pm 1.$ Then,  the following NLS coupled system (or VCNLS)
{\small 
\begin{eqnarray}
i\psi _{t} &=&-a(t)\psi_{xx} +(b\left(
t\right) x^{2}-id(t)-xf(t))\psi +i(g(t)-c\left( t\right) x)\psi
_{x}+h(t)(\left\vert \varphi \right\vert ^{2}+\left\vert \psi \right\vert
^{2})\psi,  \label{vcNLS1} \\
i\varphi_{t} &=&-a(t)\varphi_{xx} +(b\left(
t\right) x^{2}-id(t)-xf(t))\varphi +i(g(t)-c\left( t\right) x)\varphi
_{x}+h(t)(\left\vert \varphi \right\vert ^{2}+\left\vert \psi \right\vert
^{2})\varphi,  \label{vcNLS2}
\end{eqnarray}%
} admits dark-bright (DB) solitons and Rogue wave type
solutions, where $h(t)$ satisfies the integrability condition (\ref{integrability}).
\end{theorem}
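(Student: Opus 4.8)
The plan is to construct an explicit similarity transformation that maps solutions of the constant-coefficient Manakov system (\ref{ManakovSystem1})-(\ref{ManakovSystem2}) into solutions of the variable-coefficient system (\ref{vcNLS1})-(\ref{vcNLS2}). Following the approach used in the uncoupled case \cite{CorderoSoto2008,Suazo,Escorcia}, I would posit an ansatz of the form
\begin{equation*}
\psi(x,t)=\frac{1}{\sqrt{\mu(t)}}\,e^{i(\alpha(t)x^{2}+\delta(t)x+\kappa(t))}\,\chi(\xi,\tau),\qquad
\varphi(x,t)=\frac{1}{\sqrt{\mu(t)}}\,e^{i(\alpha(t)x^{2}+\delta(t)x+\kappa(t))}\,\phi(\xi,\tau),
\end{equation*}
where the new variables are $\xi=\beta(t)x+\varepsilon(t)$ and $\tau=\gamma(t)$, and $\chi,\phi$ solve the Manakov system. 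The amplitude factor $1/\sqrt{\mu(t)}$ is chosen so that the cubic nonlinearities match after the change of variables; this is exactly where the integrability condition (\ref{integrability}) on $h(t)$ enters, namely $h(t)=h_{0}\,a(t)\beta^{2}(t)/\mu(t)$ (up to the precise normalization used in the paper), relating the nonlinear management to the dispersion and the similarity functions. Since both components carry the \emph{same} phase and the \emph{same} transformed coordinates, and since the nonlinear terms $|\varphi|^{2}+|\psi|^{2}$ are identical in both equations, the coupling structure is preserved verbatim: the two equations decouple into two copies of the same scalar reduction, so the vector structure costs no extra work beyond the scalar case.

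Next I would substitute the ansatz into (\ref{vcNLS1}) and collect terms by powers of $x$ and by the derivatives $\chi_{\xi\xi},\chi_{\xi},\chi$ (and likewise with $\phi$). Requiring that the result reduce to $i\chi_{\tau}+\chi_{\xi\xi}+2(|\phi|^{2}+|\chi|^{2})\chi=0$ forces a system of ODEs on $\mu,\alpha,\beta,\gamma,\delta,\varepsilon,\kappa$ in terms of the data $a,b,c,d,f,g,h$. The equation for $\alpha$ is a Riccati equation, $\alpha'+b(t)+2c(t)\alpha+4a(t)\alpha^{2}=0$ (the Riccati system referenced as (\ref{rica1})--(\ref{rica6})/(\ref{Rica1})--(\ref{Rica6})), and the remaining functions are obtained from $\alpha$ by quadratures: $\mu$ solves a linear first-order equation driven by $\alpha$, $\beta$ is then algebraic in $\mu$, $\gamma'=a\beta^{2}$, and $\delta,\varepsilon,\kappa$ solve linear equations forced by $f,g$ and the already-determined functions. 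I would invoke the solution of the Riccati system recorded in the appendix (and verified in the supplementary Mathematica file) to assert solvability and to exhibit the multiparameter family. Plugging the known Manakov solutions (\ref{Dark_Brighta})--(\ref{Dark_Brightb}) and (\ref{RWIa})--(\ref{RWIIb}) for $\chi,\phi$ then yields the claimed dark-bright and Rogue wave solutions of the VCNLS, completing the proof.

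The main obstacle, and the only genuinely delicate bookkeeping, is organizing the substitution so that \emph{every} coefficient condition is consistent and no equation is over-determined: the terms linear in $x$ inside the exponential interact with the advection term $i(g-cx)\psi_{x}$ and with the $x^{2}$-potential, and one must check that the choice of $\beta$ forced by the $\chi_{\xi\xi}$-matching is compatible with the choice forced by eliminating the residual first-order term $\chi_{\xi}$; this compatibility is precisely what pins down the linear ODE for $\mu$ and, in turn, the integrability condition (\ref{integrability}) for $h$. I expect this to go through cleanly because it mirrors the uncoupled computation in \cite{CorderoSoto2008,Suazo,Escorcia}; the coupled case introduces no new cancellation since the nonlinear density is common to both components. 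I would therefore present the transformation explicitly, state the resulting Riccati system, defer its general solution to the appendix, and verify the final formulas by direct substitution (with the Mathematica file as backup), rather than grinding through the elimination by hand in the main text.
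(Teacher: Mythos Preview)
Your proposal is correct and follows exactly the paper's approach: the same ansatz, the same Riccati system (\ref{rica1})--(\ref{rica6}), the same appeal to the appendix for its multiparameter solution, and the same importation of the Manakov solutions (\ref{Dark_Brighta})--(\ref{RWIIb}). One small correction worth flagging: the integrability condition is $h(t)=\lambda\,a(t)\beta^{2}(t)\mu(t)$ with $\mu$ in the numerator, not $h_{0}a\beta^{2}/\mu$ --- since $|\psi|^{2}=|\chi|^{2}/\mu$, the nonlinear term carries a factor $h/\mu$ that must match the dispersion scale $a\beta^{2}$, so this is not just a normalization.
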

\begin{proof}

We look for solutions for (\ref{vcNLS1})-(\ref{vcNLS2}) of the form 
\begin{equation}
\psi (x,t)=\frac{1}{\sqrt{\mu (t)}}e^{i(\alpha (t)x^{2}+\delta (t)x+\kappa
(t))}\chi (\xi ,\tau ),\qquad \xi =\beta (t)x+\varepsilon (t),\qquad \tau
=\gamma (t),  \label{form1}
\end{equation}

and 
\begin{equation}
\varphi (x,t)=\frac{1}{\sqrt{\mu (t)}}e^{i(\alpha (t)x^{2}+\delta
(t)x+\kappa (t))}\phi (\xi ,\tau ).\qquad  \label{form2}
\end{equation}

\noindent After substituting (\ref{form1})-(\ref{form2}) in (\ref{vcNLS1})-(\ref%
{vcNLS2}) and taking the real and imaginary parts, we obtain the Riccati system   \cite{CorderoSoto2008,Escorcia,Suazo18,Suazo,Suazo13}: 
\begin{equation}
\dfrac{d\alpha }{dt}+b(t)+2c(t)\alpha +4a(t)\alpha^{2}=0,  \label{rica1}
\end{equation}%
\begin{equation}
\dfrac{d\beta }{dt}+(c(t)+4a(t)\alpha(t))\beta =0,  \label{rica2}
\end{equation}%
\begin{equation}
\dfrac{d\gamma }{dt}+a(t)\beta^{2}(t)l_{0}=0,
\label{rica3}
\end{equation}%
\begin{equation}
\dfrac{d\delta }{dt}+(c(t)+4a(t)\alpha(t))\delta = f(t)+2\alpha (t)g(t),
\label{rica4}
\end{equation}%
\begin{equation}
\dfrac{d\varepsilon }{dt}=(g(t)-2a(t)\delta(t))\beta (t),  \label{rica5}
\end{equation}%
\begin{equation}
\dfrac{d\kappa }{dt}=g(t)\delta (t)-a(t)\delta^{2}(t).  \label{rica6}
\end{equation}%
\ Considering the standard substitution\ 
\begin{equation}
\alpha = \left(\dfrac{1}{4a(t)}\dfrac{\mu ^{\prime }(t)}{\mu (t)}-\dfrac{d(t)}{2a(t)%
}\right),  \label{sus1}
\end{equation}%
it follows that the Riccati equation (\ref{rica1}) becomes\ 
\begin{equation}
\mu ^{\prime \prime }-\eta (t)\mu ^{\prime }+4\sigma (t)\mu =0,
\label{Carac1}
\end{equation}%
with\ 
\begin{equation}
\eta (t)=\frac{a^{\prime }}{a}-2c+4d,\hspace{1cm}\sigma (t)=ab-cd+d^{2}+%
\frac{d}{2}\left( \frac{a^{\prime }}{a}-\frac{d^{\prime }}{d}\right) .
\end{equation}%
\noindent  We will refer to (\ref{Carac1}) as the characteristic equation of the
Riccati system. Further, if we
choose by integrability conditions for the uncoupled case 
\begin{equation}
h(t)=\lambda a(t)\beta^{2}(t)\mu(t)\label{integrability}    
\end{equation}
 with $\lambda \in \mathbb{R}$, the functions $\chi $ and $\phi $ will satisfy the NLS system in the new  variables $\tau$ and $\xi$:
\begin{eqnarray}
i\chi _{\tau }-l_{0}\chi_{\xi \xi }+l_{0}\lambda(\left\vert \phi \right\vert
^{2}+\left\vert \chi \right\vert ^{2})\chi  &=&0,  \label{Classical 1} \\
i\phi _{\tau }-l_{0}\phi _{\xi \xi }+l_{0}\lambda(\left\vert \phi \right\vert
^{2}+\left\vert \chi \right\vert ^{2})\phi  &=&0,  \label{Cassical 2}
\end{eqnarray}
where  $\alpha (t),$ $\beta (t),$ $\gamma (t),$ $\delta (t),$ $\epsilon (t)$
and $\kappa (t)$ are given by the Equations (\ref{mu})-(\ref{kappa0}). Therefore, the existence of these kinds of solutions (DB, and Rogue wave solutions) are established by letting $\chi$ and $\phi$ be the solutions discussed in the previous section, see Equations (\ref{Dark_Brighta})-(\ref{RWIIb}). 
\end{proof}


\subsection{Rogue wave solutions}

The existence of Rogue wave solutions can be guaranteed by the extension of the solutions (\ref{RWIa})-(\ref{RWIb}) and (\ref{RWIIa})-(\ref{RWIIb})  by the previous Theorem \ref{Th1}. We start with the Type I Rogue Wave.

\subsubsection{Case $a = d = 1, \ b = t,  \ c = f = g = 0, \ h = \frac{-2e^{2t}}{{}_{0}F_{1}\left(2/3,-4t^{3}/9 \right)}  $}

In this case, the coupled nonlinear Schr$\ddot{\mbox{o}}$dinger equations have the form
\begin{equation}
i\psi_{t}=-\psi_{xx}+tx^{2}\psi-i\psi-\frac{2e^{2t}}{{}_{0}F_{1}\left(2/3,-4t^{3}/9 \right)}\left(|\varphi|^{2}+|\psi|^{2}\right)\psi, \label{Ex1a}
\end{equation} 
\begin{equation}
i\varphi_{t}=-\varphi_{xx}+tx^{2}\varphi-i\varphi-\frac{2e^{2t}}{{}_{0}F_{1}\left(2/3,-4t^{3}/9 \right)}\left(|\varphi|^{2}+|\psi|^{2}\right)\varphi. \label{Ex1b}
\end{equation} 
Then, according to Theorem \ref{Th1}, we get 
\begin{eqnarray*}
 \alpha(t)=\frac{1}{4t}-\frac{1+t^{3}{}_{0}F_{1}\left(2/3,-4t^{3}/9 \right){}_{0}F_{1}\left(7/3,-4t^{3}/9\right)}{4t{}_{0}F_{1}\left(2/3,-4t^{3}/9 \right){}_{0}F_{1}\left(4/3,-4t^{3}/9 \right)},  \quad \quad \quad \beta(t)= \dfrac{1}{{}_{0}F_{1}\left(2/3,-4t^{3}/9 \right)}, \\\\
 \gamma(t)=\frac{t{}_{0}F_{1}\left(4/3,-4t^{3}/9 \right)}{{}_{0}F_{1}\left(2/3,-4t^{3}/9 \right)},  \quad \quad \quad \delta(t)=\dfrac{1}{{}_{0}F_{1}\left(2/3,-4t^{3}/9 \right)}, \quad \quad \varepsilon(t)=-\dfrac{2t{}_{0}F_{1}\left(4/3,-4t^{3}/9 \right)}{{}_{0}F_{1}\left(2/3,-4t^{3}/9 \right)}, \\\\
 \kappa(t)=-\frac{t{}_{0}F_{1}\left(4/3,-4t^{3}/9 \right)}{{}_{0}F_{1}\left(2/3,-4t^{3}/9 \right)}, \quad \quad \mu(t)=e^{2t}{}_{0}F_{1}\left(2/3,-4t^{3}/9 \right), \quad \quad \quad \quad 
\end{eqnarray*}
where ${}_{0}F_{1}\left(a,z\right)$ is the confluent hypergeometric function given by the series ${}_{0}F_{1}\left(a,z\right)=\sum_{k=0}^{\infty}\frac{z^{k}}{(a)_{k}k!}$ with $(a)_{k}=\Gamma(a+k)/\Gamma(a)$ and $\Gamma(z)$ is the gamma function. Here, we have solved the Riccati system 

\begin{figure}[h!]
\centering
\subfigure[Profile of $|\psi|^2$.]{\includegraphics[scale=0.34]{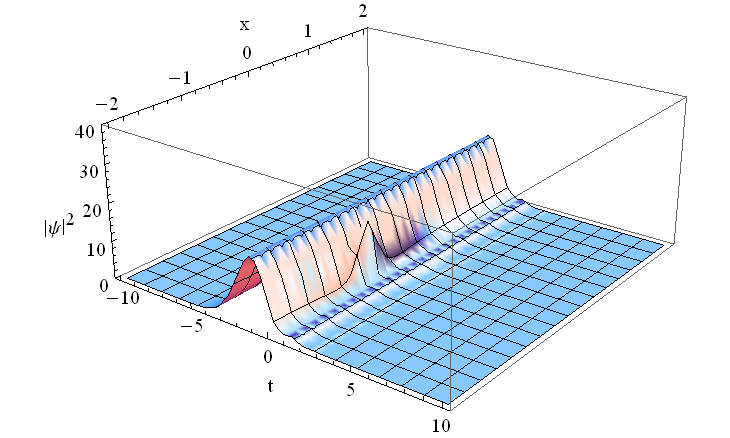}}
\subfigure[Profile of $|\varphi|^2$.]{\includegraphics[scale=0.32]{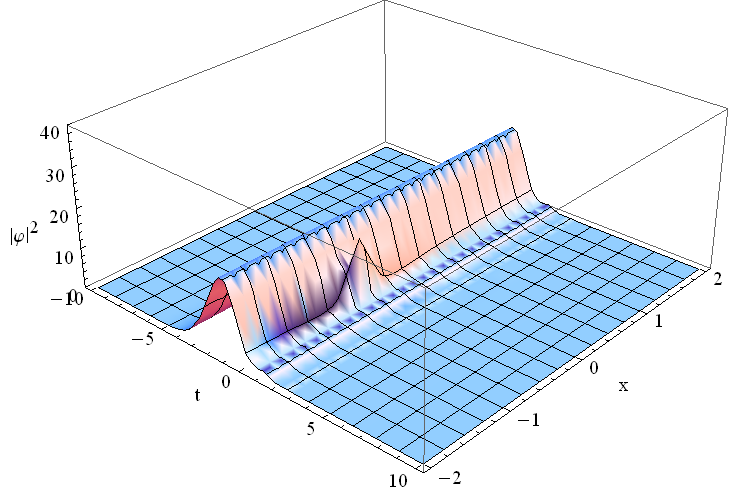}}
\subfigure[Difference $|\psi-\varphi|$.]{\includegraphics[scale=0.26]{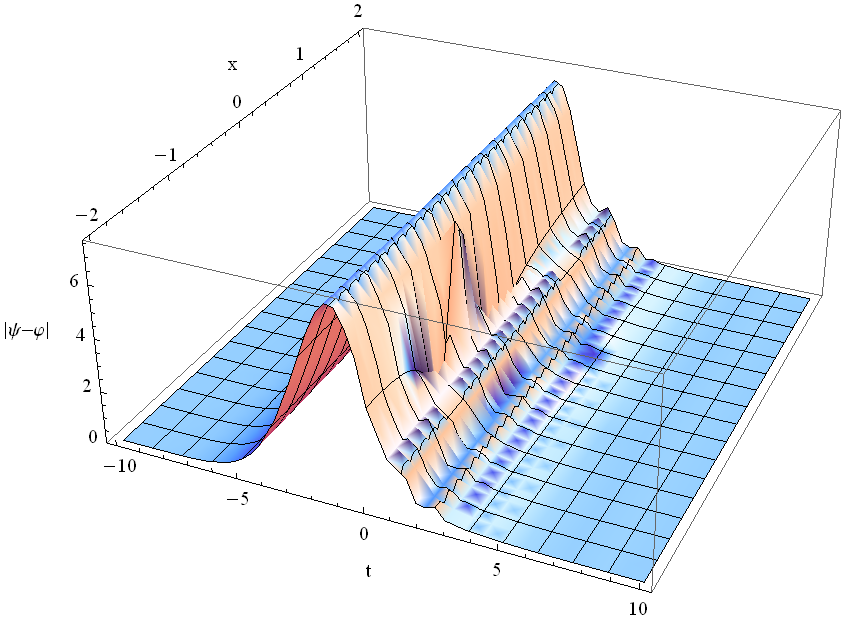}}
\caption{Solutions for the system  (\ref{Ex1a})-(\ref{Ex1b}) for the parameters $d_2=1,$ $q=0$.}\label{Fig1}
\end{figure}
\noindent with the initial conditions $\alpha(0)=0$, $\beta(0)=1$, $\gamma(0)=0,$ $\delta(0)=1$, $\varepsilon(0)=0,$ $\kappa(0)=0$ and $\mu(0)=1.$ \\\
Therefore, we  can construct a solution in the form:
\begin{equation}
\psi(x,t)=\dfrac{e^{-t}}{\sqrt{{}_{0}F_{1}\left(2/3,-4t^{3}/9 \right)}}\exp\left[i\left(\alpha(t)x^2+\delta(t)x+\kappa(t)\right)\right]\chi(\xi,\tau),
\end{equation}
\begin{equation}
\varphi(x,t)=\dfrac{e^{-t}}{\sqrt{{}_{0}F_{1}\left(2/3,-4t^{3}/9 \right)}}\exp\left[i\left(\alpha(t)x^2+\delta(t)x+\kappa(t)\right)\right]\phi(\xi,\tau),
\end{equation}\\\
where $\chi$ and $\phi$ are given by Equations (\ref{RWIa})-(\ref{RWIb}). The profiles of these solutions for the parameters $d_2 = 1$ and $q = 0$ are shown in Figure \ref{Fig1}(a)-(b). Figure \ref{Fig1}(c) shows the difference between the two solutions to show that they are different.


\subsubsection{Case $a = -\frac{\cos t}{2}, \ b = c = f = g = 0, \ d = -1, \ h = e^{-2t}\cos t $}
The corresponding  system is given by
\begin{equation}
i\psi_{t}=\frac{\cos t}{2}\psi_{xx}+i\psi+e^{-2t}\cos t \left(|\varphi|^{2}+|\psi|^{2}\right)\psi, \label{Ex2a}
\end{equation} 
\begin{equation}
i\varphi_{t}=\frac{\cos t}{2}\varphi_{xx}+i\varphi+e^{-2t}\cos t \left(|\varphi|^{2}+|\psi|^{2}\right)\varphi. \label{Ex2b}
\end{equation} 

\begin{figure}[h!]
\centering
\subfigure[Profile of $|\psi|^2$.]{\includegraphics[scale=0.29]{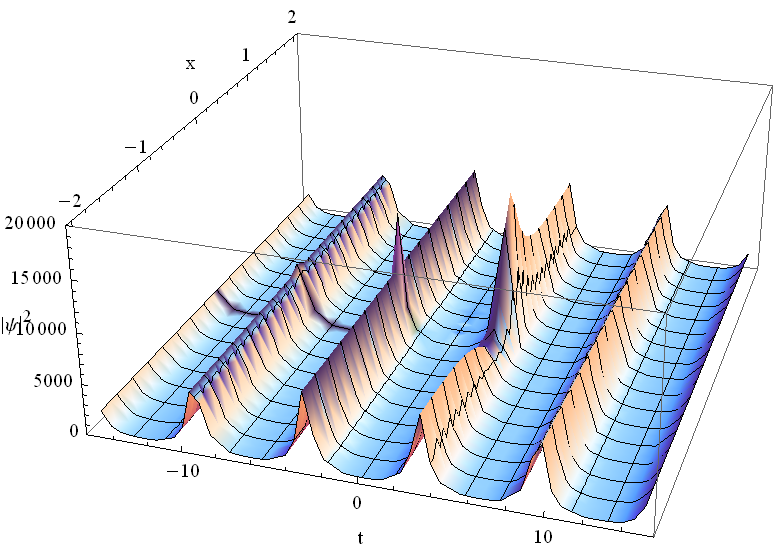}}
\subfigure[Profile of $|\varphi|^2$.]{\includegraphics[scale=0.28]{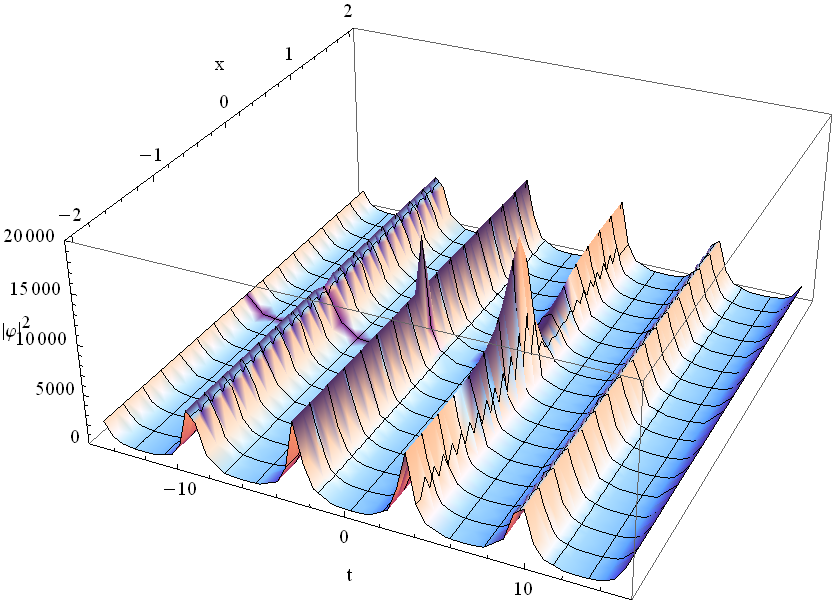}}
\subfigure[Difference $|\psi-\varphi|$.]{\includegraphics[scale=0.29]{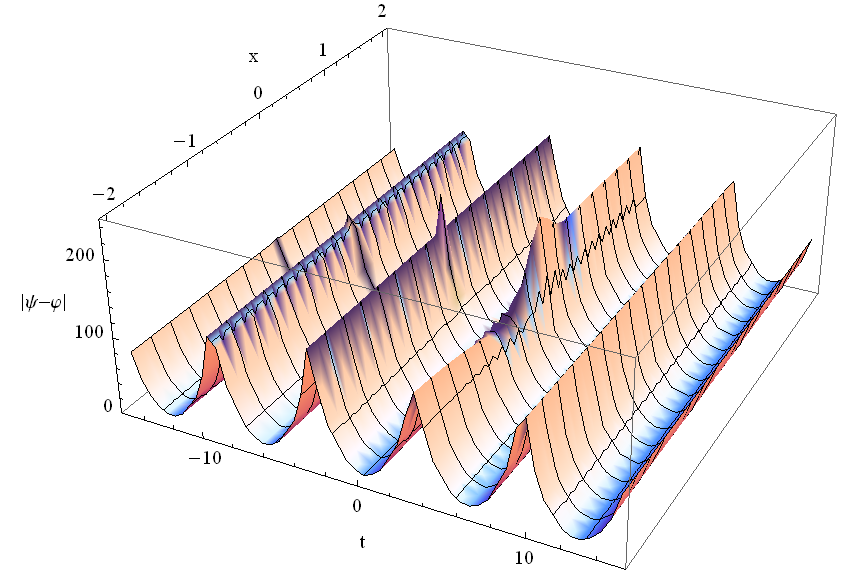}}
\caption{Solutions for the system  (\ref{Ex2a})-(\ref{Ex2b}) with  parameters $d_2=1,$ $q=-1$.}\label{Fig2}
\end{figure}

 Then, 
\begin{eqnarray*}
 \alpha(t)= 0,  \quad \quad \quad \quad \beta(t)= 1, \quad \quad \quad 
 \gamma(t)=-\frac{\sin t}{2},  \quad \quad \quad   \delta(t)=1, \\
 \varepsilon(t)= \sin t, \quad \quad \quad \quad \quad \quad 
 \kappa(t)=\frac{\sin t}{2},\quad \quad \quad \quad \quad  \quad \quad \mu(t)=e^{-2t}.\quad \quad \quad \quad 
\end{eqnarray*}
 The initial conditions used to solve the Riccati system are the same as those given in the previous case. In fact, except for  the section on bending dynamics, we shall use these initial conditions for the rest of the study. Solutions for the system (\ref{Ex2a})-(\ref{Ex2b}) are

\begin{equation}
\psi(x,t)=e^{\arccos(\cos t)}\exp\left[i\left(\alpha(t)x^2+\delta(t)x+\kappa(t)\right)\right]\chi(\xi,\tau),
\end{equation}
\begin{equation}
\varphi(x,t)=e^{\arccos(\cos t)}\exp\left[i\left(\alpha(t)x^2+\delta(t)x+\kappa(t)\right)\right]\phi(\xi,\tau).
\end{equation}\\\
Again $\chi$ and $\phi$ are given by the Equation (\ref{RWIa})-(\ref{RWIb}). Figure \ref{Fig2} shows the time evolution of $|\psi|^2$ and $|\varphi|^2$ for the corresponding solutions. The difference between the two solutions are shown in the Figure \ref{Fig2}(c). 


 The next two cases correspond to the Type II Rogue Wave solutions.

 \subsubsection{Case $a = -\sech t/2, \ b = -4a,  \ c = 4a, \ d = 2a, \ f = g = 0,  \ h = \frac{-2\sech t}{-2\cosh\left(\sqrt{8}\gd t\right) + \sqrt{2}\sinh\left(\sqrt{8}\gd t\right)}  $}
  Assume the system
\begin{equation}
i\psi_{t}=\sech t\left(\dfrac{\psi_{xx}}{2}+(i+2x^2)\psi +2ix\psi_{x}-\frac{2\left(|\varphi|^{2}+|\psi|^{2}\right)\psi}{-2\cosh\left(\sqrt{8}\gd t\right) + \sqrt{2}\sinh\left(\sqrt{8}\gd t\right)}
\right), \label{Ex3a}
\end{equation} 
\begin{equation}
i\varphi_{t}=\sech t\left(\dfrac{\varphi_{xx}}{2}+(i+2x^2)\varphi +2ix\varphi_{x}-\frac{2\left(|\varphi|^{2}+|\psi|^{2}\right)\varphi}{-2\cosh\left(\sqrt{8}\gd t\right) + \sqrt{2}\sinh\left(\sqrt{8}\gd t\right)}
\right), \label{Ex3b}
\end{equation}
where $\gd t$ is the Gudermannian function given by $\gd t=2\arctan\left[\tanh\left(\tfrac12t\right)\right]$. The expressions for the functions $\alpha(t)$, $\beta(t)$, $\gamma(t)$, $\delta(t)$, $\varepsilon(t)$, $\kappa(t)$ and $\mu(t)$ are
\begin{eqnarray*}
 \alpha(t)=\dfrac{1}{1-\sqrt{2}\coth\left(\sqrt{8}\gd t\right)},  \quad \quad \quad \beta(t)= \dfrac{\sqrt{2}}{\sqrt{2}\cosh\left(\sqrt{8}\gd t\right)-\sinh\left(\sqrt{8}\gd t\right)}, \\\\
 \gamma(t)=\dfrac{1}{4-4\sqrt{2}\coth\left(\sqrt{8}\gd t\right)},  \quad \quad \quad  \quad  \quad \delta(t)=\dfrac{\sqrt{2}}{\sqrt{2}\cosh\left(\sqrt{8}\gd t\right)-\sinh\left(\sqrt{8}\gd t\right)},  
 \\\ \varepsilon(t)=-\dfrac{1}{2-\sqrt{8}\coth\left(\sqrt{8}\gd t\right)}, \quad \quad \quad \quad \kappa(t)=-\dfrac{1}{4-4\sqrt{2}\coth\left(\sqrt{8}\gd t\right)},
  \\\\ \mu(t)=\cosh\left(\sqrt{8}\gd t\right)-\frac{\sqrt{2}}{2}\sinh\left(\sqrt{8}\gd t\right). \quad \quad \quad \quad  \quad \quad 
\end{eqnarray*}

\begin{figure}[h!]
\centering
\subfigure[Profile of $|\psi|^2$.]{\includegraphics[scale=0.32]{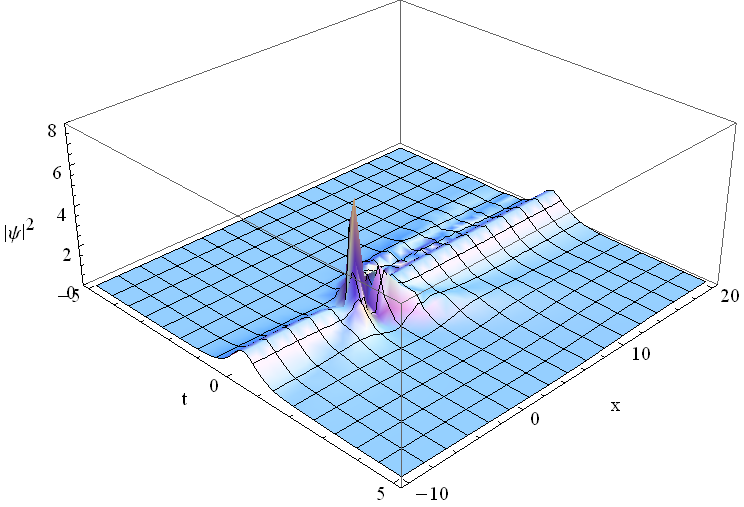}}
\subfigure[Profile of $|\varphi|^2$.]{\includegraphics[scale=0.31]{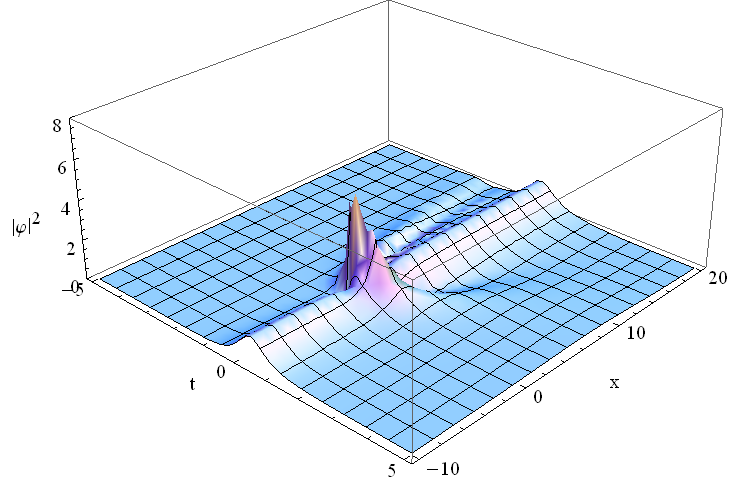}}
\subfigure[Difference $|\psi-\varphi|$.]{\includegraphics[scale=0.31]{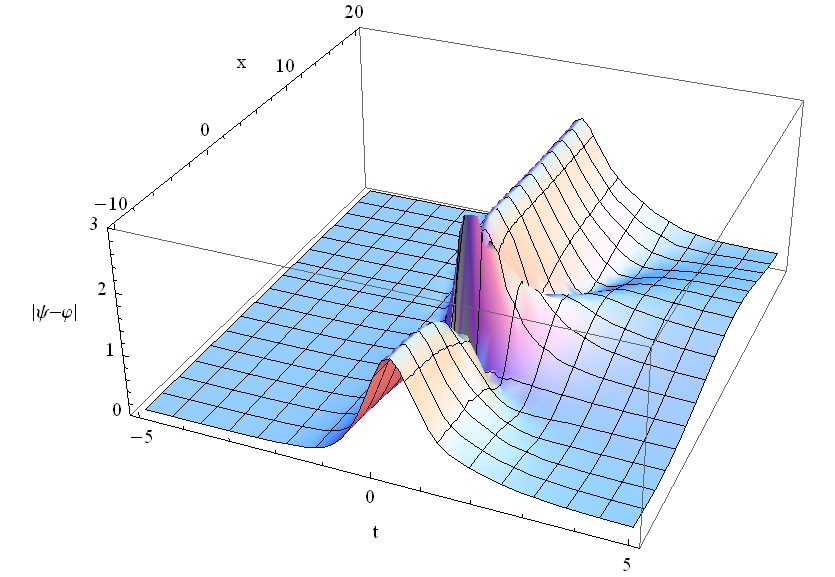}}
\caption{Solutions for the system  (\ref{Ex3a})-(\ref{Ex3b}) for the parameters $d_2=-0.5,$ $q=0$.}\label{Fig3}
\end{figure}
\noindent Now, we can give a form for the solutions, that is,
\begin{equation}
\psi(x,t)=\dfrac{1}{\sqrt{\cosh\left(\sqrt{8}\gd t\right)-\frac{\sqrt{2}}{2}\sinh\left(\sqrt{8}\gd t\right)}}\exp\left[i\left(\alpha(t)x^2+\delta(t)x+\kappa(t)\right)\right]\chi(\xi,\tau),
\end{equation}
\begin{equation}
\varphi(x,t)=\dfrac{1}{\sqrt{\cosh\left(\sqrt{8}\gd t\right)-\frac{\sqrt{2}}{2}\sinh\left(\sqrt{8}\gd t\right)}}\exp\left[i\left(\alpha(t)x^2+\delta(t)x+\kappa(t)\right)\right]\phi(\xi,\tau),
\end{equation}
with  $\chi$ and $\phi$ satisfying the Equations (\ref{RWIIa})-(\ref{RWIIb}). The dynamics of these solutions are shown in Figure \ref{Fig3}.


\subsubsection{Case $a = -(1+t^2) \cos t /2, \ b = 2a,  \ c = 4a, \ d = 2a, \ f = g = 0,  \ h = \frac{2a}{\cosh \theta -\sqrt{2}\sinh \theta }   $}
Consider the system of the nonlinear Schr$\ddot{\mbox{o}}$dinger equations
\begin{equation}
i\psi_{t}=(1+t^2)\cos t\left(\dfrac{\psi_{xx}}{2}+(i-x^2)\psi +2ix\psi_{x}-\frac{\left(|\varphi|^{2}+|\psi|^{2}\right)\psi}{\cosh \theta -\sqrt{2}\sinh \theta }\right), \label{Ex4a}
\end{equation} 
\begin{equation}
i\varphi_{t}=(1+t^2)\cos t\left(\dfrac{\varphi_{xx}}{2}+(i-x^2)\varphi +2ix\varphi_{x}-\frac{\left(|\varphi|^{2}+|\psi|^{2}\right)\varphi}{\cosh \theta -\sqrt{2}\sinh \theta}\right), \label{Ex4b}
\end{equation}
with $\theta = \sqrt{8}t \cos t + \sqrt{2}(t^2 -1)\sin t$. In this case, the solution of the Riccati system has the form
\begin{eqnarray*}
 \alpha(t)=\dfrac{1}{-2+\sqrt{2}\coth \theta}, \quad \quad  \quad \quad \quad \beta(t)=\dfrac{1}{\cosh \theta - \sqrt{2}\sinh \theta} , \\\\
 \gamma(t)=\dfrac{1}{4-\sqrt{8}\coth \theta},  \quad \quad \quad  \quad  \quad \delta(t)=\dfrac{1}{\cosh \theta -\sqrt{2}\sinh \theta},  
 \\\ \varepsilon(t)=\dfrac{1}{-2 + \sqrt{2}\coth \theta}, \quad \quad \quad \quad \kappa(t)=\dfrac{1}{-4 + \sqrt{8}\coth \theta},
  \\\\ \mu(t)= \cosh \theta - \sqrt{2}\sinh \theta .  \quad \quad \quad \quad  \quad \quad \quad \quad 
\end{eqnarray*}

\begin{figure}[h!]
\centering
\subfigure[Dynamics of $|\psi|^2$.]{\includegraphics[scale=0.31]{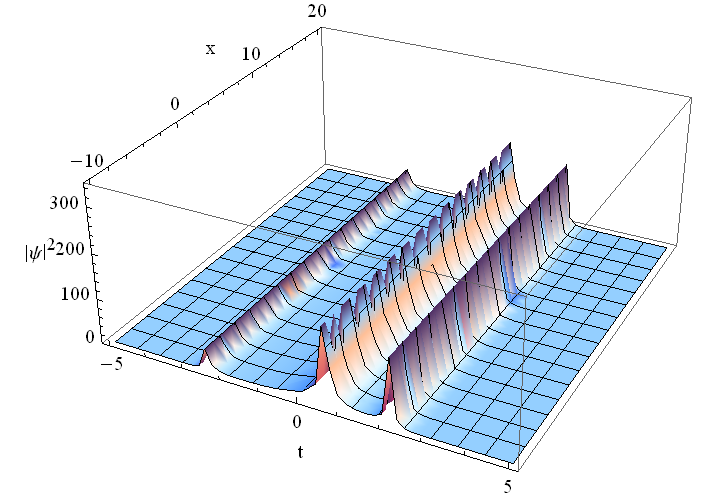}}
\subfigure[Dynamics of $|\varphi|^2$.]{\includegraphics[scale=0.29]{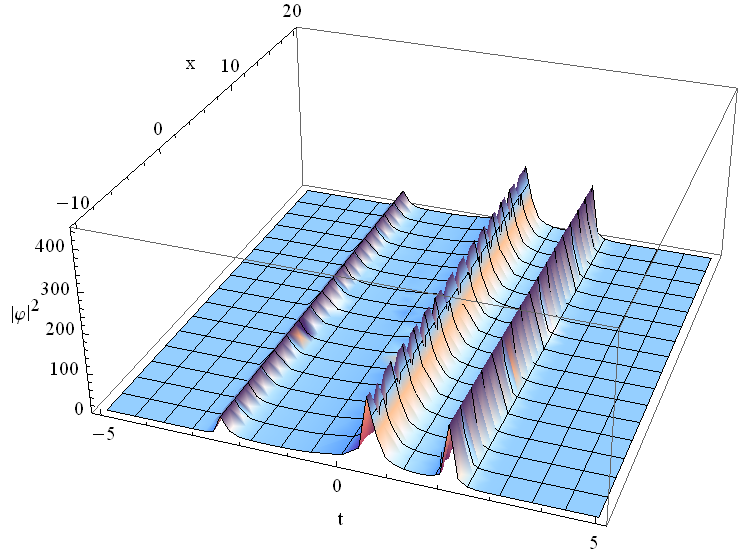}}
\subfigure[Difference $|\psi-\varphi|$.]{\includegraphics[scale=0.28]{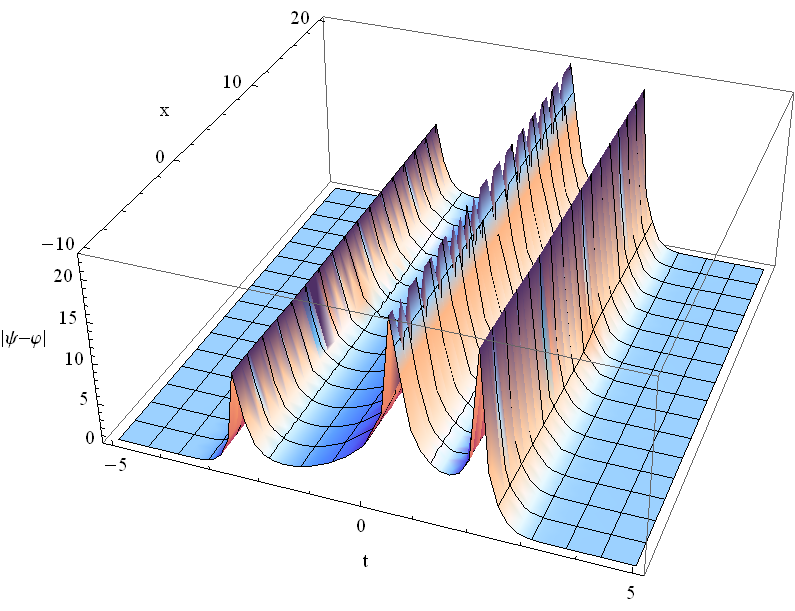}}
\caption{Solutions for the system  (\ref{Ex4a})-(\ref{Ex4b}) with parameters $d_2=1,$ $q=-1$.}\label{Fig4}
\end{figure}
 \noindent Then, by the assumption that $\chi$ and $\phi$ satisfy equations (\ref{RWIIa})-(\ref{RWIIb}), Theorem \ref{Th1} provides the following solutions:

\begin{equation}
\psi(x,t)=\dfrac{1}{\sqrt{\cosh \theta - \sqrt{2}\sinh \theta}}\exp\left[i\left(\alpha(t)x^2+\delta(t)x+\kappa(t)\right)\right]\chi(\xi,\tau),
\end{equation}
\begin{equation}
\varphi(x,t)=\dfrac{1}{\sqrt{\cosh \theta - \sqrt{2}\sinh \theta}}\exp\left[i\left(\alpha(t)x^2+\delta(t)x+\kappa(t)\right)\right]\phi(\xi,\tau).
\end{equation}

\noindent Figure \ref{Fig4} shows the profiles of   $|\psi|^2,  |\varphi|^2$  for these particular solutions.

\subsection{Dark-bright soliton-type solutions}

This part of the work is concerned with the construction of  solutions  for the generalized coupled NLS system  with soliton-type properties.  As we will see, (\ref{vcNLS1})-(\ref{vcNLS2}) admits  DB type-solitons. 

\subsubsection{Case $a = 2, \ b = \frac{t+1}{2},  \ c = -2, \ d = -1, \ f = g = 0,  \ h = \frac{-4}{{}_{0}F_{1}\left(2/3,-4t^{3}/9 \right)-2t{}_{0}F_{1}\left(4/3,-4t^{3}/9 \right)}    $}
The system of the nonlinear Schr$\ddot{\mbox{o}}$dinger equations
\begin{equation}
i\psi_{t}=-2\psi_{xx} + \frac{t+1}{2}x^{2}\psi + i\psi + 2ix\psi_{x}- \frac{4\left(|\varphi|^{2}+|\psi|^{2}\right)\psi}{{}_{0}F_{1}\left(2/3,-4t^{3}/9 \right)-2t{}_{0}F_{1}\left(4/3,-4t^{3}/9 \right)}, \label{Ex5a}
\end{equation} 
\begin{equation}
i\varphi_{t}=-2\varphi_{xx} + \frac{t+1}{2}x^{2}\varphi + i\varphi + 2ix\varphi_{x}- \frac{4\left(|\varphi|^{2}+|\psi|^{2}\right)\varphi}{{}_{0}F_{1}\left(2/3,-4t^{3}/9 \right)-2t{}_{0}F_{1}\left(4/3,-4t^{3}/9 \right)}, \label{Ex5b}
\end{equation}

\begin{figure}[h!]
\centering
\subfigure[Profile of $|\psi|^2$.]{\includegraphics[scale=0.29]{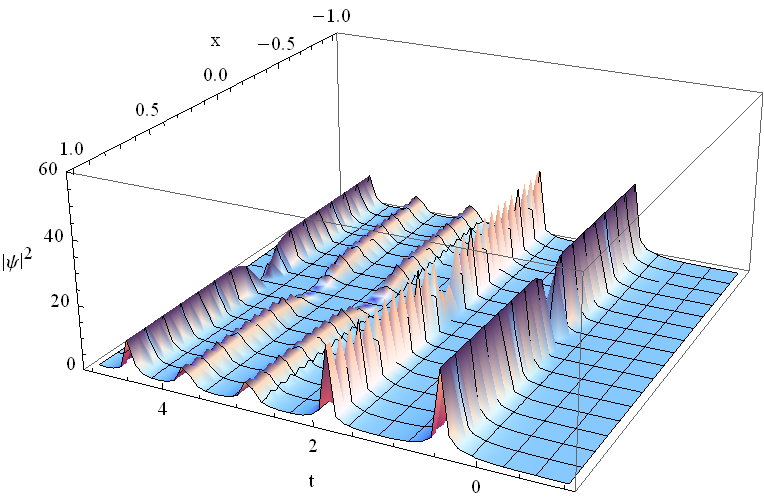}}
\subfigure[Profile of $|\varphi|^2$.]{\includegraphics[scale=0.3]{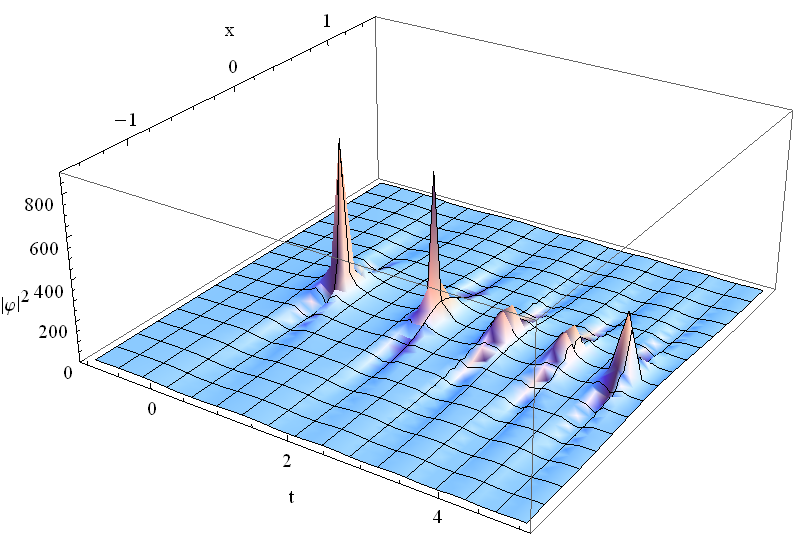}}
\caption{Solutions for the system  (\ref{Ex5a})-(\ref{Ex5b}) for the parameters $e=-1,$ $C=1,$ $a_3 = 1$ and $b_3 = 1$.}\label{Fig5}
\end{figure}

\noindent admits the functions
\begin{eqnarray*}
 \alpha(t)= \dfrac{1}{4} + \dfrac{1}{8t} + \dfrac{1}{-8t{}_{0}F_{1}\left(2/3,-4t^{3}/9 \right){}_{0}F_{1}\left(4/3,-4t^{3}/9 \right)  + 16t^{2}{}_{0}F_{1}^{2}\left(4/3,-4t^{3}/9 \right)}-\dfrac{t^{2}{}_{0}F_{1}\left(7/3,-4t^{3}/9 \right)}{8{}_{0}F_{1}\left(4/3,-4t^{3}/9 \right)}, \\\ \beta(t)=\dfrac{1}{{}_{0}F_{1}\left(2/3,-4t^{3}/9 \right)-2t{}_{0}F_{1}\left(4/3,-4t^{3}/9 \right)}, \quad \quad 
 \gamma(t)=\dfrac{2t{}_{0}F_{1}\left(4/3,-4t^{3}/9 \right)}{{}_{0}F_{1}\left(2/3,-4t^{3}/9 \right)-2t{}_{0}F_{1}\left(4/3,-4t^{3}/9 \right)},    \\\ \delta(t)=\dfrac{1}{{}_{0}F_{1}\left(2/3,-4t^{3}/9 \right)-2t{}_{0}F_{1}\left(4/3,-4t^{3}/9 \right)},  
 \quad \quad  \varepsilon(t)=\dfrac{-4t{}_{0}F_{1}\left(4/3,-4t^{3}/9 \right)}{{}_{0}F_{1}\left(2/3,-4t^{3}/9 \right)-2t{}_{0}F_{1}\left(4/3,-4t^{3}/9 \right)},
\end{eqnarray*}

\begin{eqnarray*}
\kappa(t)=\dfrac{-2t{}_{0}F_{1}\left(4/3,-4t^{3}/9 \right)}{{}_{0}F_{1}\left(2/3,-4t^{3}/9 \right)-2t{}_{0}F_{1}\left(4/3,-4t^{3}/9 \right)},
\quad \quad  \mu(t)= {}_{0}F_{1}\left(2/3,-4t^{3}/9 \right)-2t{}_{0}F_{1}\left(4/3,-4t^{3}/9 \right).
\end{eqnarray*}
\noindent In these terms,  explicit solutions are given by the expressions
\begin{equation}
\psi(x,t)=\dfrac{1}{\sqrt{{}_{0}F_{1}\left(2/3,-4t^{3}/9 \right)-2t{}_{0}F_{1}\left(4/3,-4t^{3}/9 \right)}}\exp\left[i\left(\alpha(t)x^2+\delta(t)x+\kappa(t)\right)\right]\chi(\xi,\tau),
\end{equation}
\begin{equation}
\varphi(x,t)=\dfrac{1}{\sqrt{{}_{0}F_{1}\left(2/3,-4t^{3}/9 \right)-2t{}_{0}F_{1}\left(4/3,-4t^{3}/9 \right)}}\exp\left[i\left(\alpha(t)x^2+\delta(t)x+\kappa(t)\right)\right]\phi(\xi,\tau),
\end{equation}
with $\chi$ and $\phi$ satisfying  Equations (\ref{Dark_Brighta})-(\ref{Dark_Brightb}).

\subsubsection{Case $a = 1, \ b = \frac{\sin^2 t - \cos t}{4},  \ c = -\sin t, \ d = \sin t, \ f = g = 0,  \ h = - 2e^{3-3\cos t}    $}
Consider the system of nonlinear Schr$\ddot{\mbox{o}}$dinger equations defined for $t>0,$
\begin{equation}
i\psi_{t}=-\psi_{xx} + \frac{\sin^{2} t-\cos t}{4}x^{2}\psi - i\sin t \psi + ix\sin t \psi_{x}- 2e^{3-3\cos t}\left(|\varphi|^{2}+|\psi|^{2}\right)\psi, \label{Ex6a}
\end{equation} 
\begin{equation}
i\varphi_{t}=-\varphi_{xx} + \frac{\sin^{2} t-\cos t}{4}x^{2}\varphi - i\sin t \varphi + ix\sin t \varphi_{x}- 2e^{3-3\cos t}\left(|\varphi|^{2}+|\psi|^{2}\right)\varphi. \label{Ex6b}
\end{equation}
It has the explicit soliton-type solutions
\begin{equation}
\psi(x,t)=\dfrac{1}{\sqrt{e^{3-3\cos t}}}\exp\left[i\left(\alpha(t)x^2+\delta(t)x+\kappa(t)\right)\right]\chi(\xi,\tau),
\end{equation}
\begin{equation}
\varphi(x,t)=\dfrac{1}{\sqrt{e^{3-3\cos t}}}\exp\left[i\left(\alpha(t)x^2+\delta(t)x+\kappa(t)\right)\right]\phi(\xi,\tau),
\end{equation}

\begin{figure}[h!]
\centering
\subfigure[Evolution of $|\psi|^2$.]{\includegraphics[scale=0.29]{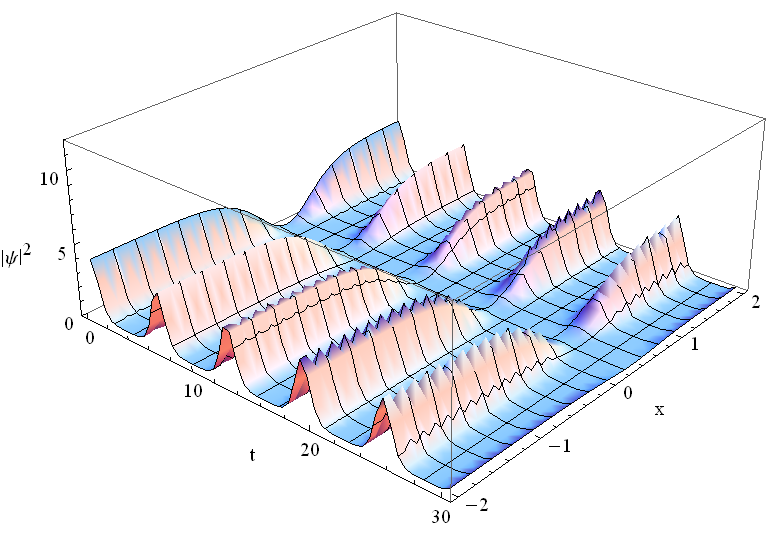}}
\subfigure[Evolution of $|\varphi|^2$.]{\includegraphics[scale=0.31]{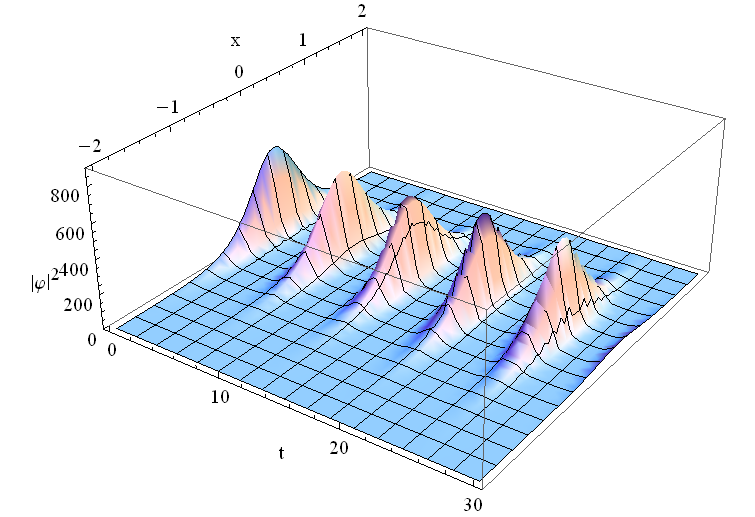}}
\caption{Solutions for the system  (\ref{Ex6a})-(\ref{Ex6b}) for the parameters $e=-1,$ $C=2,$ $a_3 = -1$ and $b_3 = 1$.}\label{Fig6}
\end{figure}

\noindent where $\chi$ and $\phi$ satisfy equations (\ref{Dark_Brighta})-(\ref{Dark_Brightb}), and 
\begin{eqnarray*}
 \alpha(t)= \dfrac{\sin t}{4}, \quad  \quad \quad  \beta(t)=1,\quad  \quad \quad 
 \gamma(t)= t,   \quad  \quad \quad  \delta(t)=1, \\\  
  \varepsilon(t)=-2t, \quad \quad \quad  \kappa(t)=-t,
  \quad \quad \quad  \mu(t)= e^{3-3\cos t}.
\end{eqnarray*}

\noindent In Figure \ref{Fig6}, we display the profiles of  $|\psi|^2,  |\varphi|^2$  by using the values $e=-1,$ $C=2,$ $a_3 = -1$, and $b_3 = 1$.

\subsection{Bending dynamics}
 The ability to manipulate the dynamics of solutions is extremely valuable in practice. In this sense, the Riccati approach used here allows the solutions to be controlled. To demonstrate this point, consider the system (\ref{Ex6a})-(\ref{Ex6b}) again, but this time change the initial conditions for $\delta(t)$ and $\varepsilon(t)$, i.e., $\delta(0) = 0.8,$ $\varepsilon(0) = -1$; and $\delta(0) = 1.5,$ $\varepsilon(0) = -4.5$. The $\delta(t)$, $\varepsilon(t)$, and $\kappa(t)$ functions are now:
\begin{eqnarray*}
  \delta(t)= 0.8, \quad  \quad \quad  
  \varepsilon(t)=-1-1.6t, \quad \quad \quad  \kappa(t)=-0.64t; \\\
   \delta(t)= 1.5, \quad  \quad \quad  
  \varepsilon(t)=-4.5-3t, \quad \quad \quad  \kappa(t)=-2.25t.
\end{eqnarray*}

\begin{figure}[h!]
\centering
\subfigure[Profile of $|\psi|^2$  bended to
the left: $\delta(0) = 0.8$ and $\varepsilon(0) = -1$.]{\includegraphics[scale=0.31]{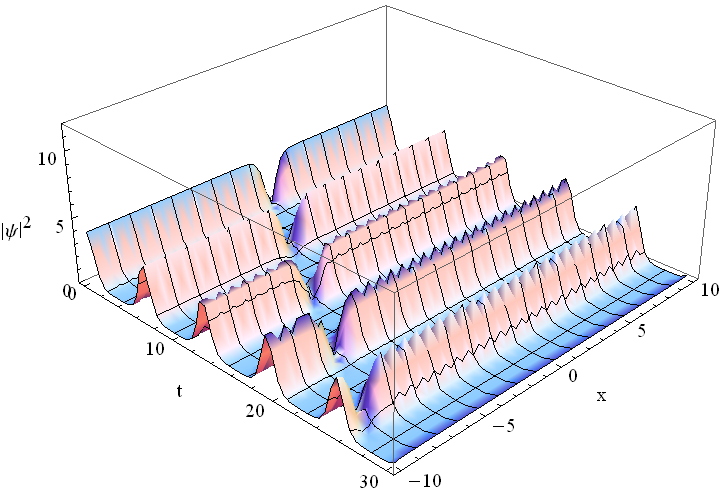}}
\subfigure[Profile of $|\varphi|^2$ bended to the left: $\delta(0) = 0.8$ and $\varepsilon(0) = -1$.]{\includegraphics[scale=0.27]{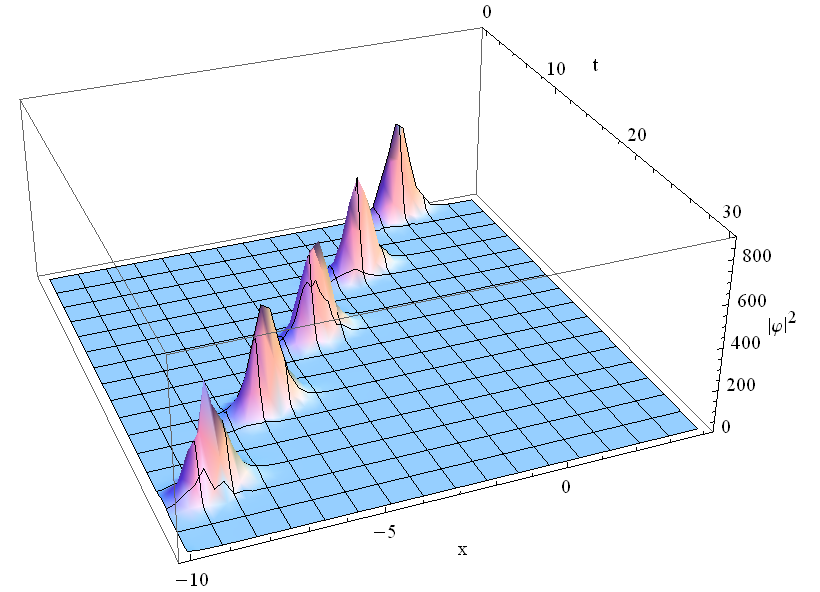}}
\subfigure[Profile of $|\psi|^2$ bended to the right: $\delta(0) = 1.5$ and $\varepsilon(0) = -4.5$.]{\includegraphics[scale=0.29]{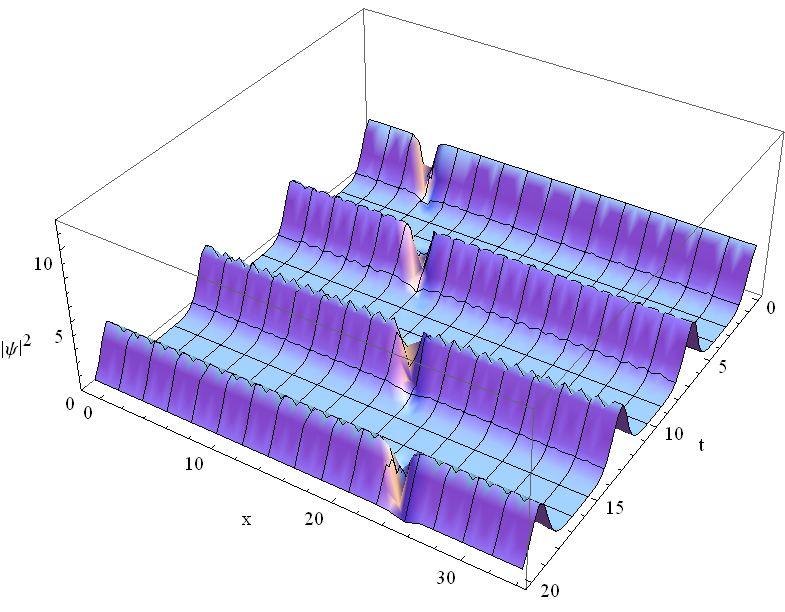}}
\subfigure[Profile of $|\varphi|^2$ bended to the right: $\delta(0) = 1.5$ and $\varepsilon(0) = -4.5$.]{\includegraphics[scale=0.28]{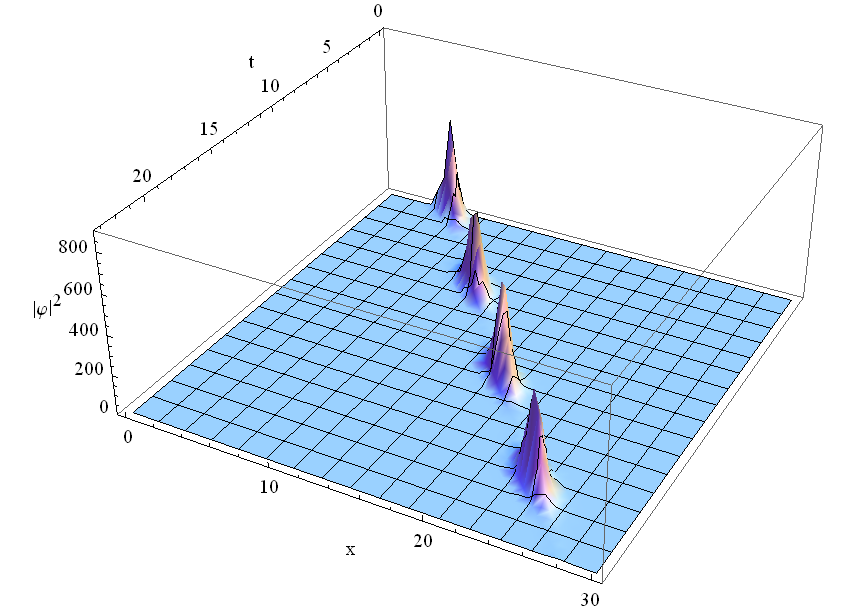}}
\caption{Control of the dynamics of the system's solution (\ref{Ex6a})-(\ref{Ex6b}).}\label{Fig7}
\end{figure}
\noindent Figure \ref{Fig7}, shows how these new parameters influence the system's nonlinear dynamics: The propagation axis of the solutions is bended.



\section{Blowup Solutions of the Generalized Coupled NLS System }

The current section is focused on the study of singular solutions of the coupled VCNLS. More precisely, we demonstrate by explicit construction the existence of finite-time  singular solutions  in the $L^{\infty}$ norm. 
\begin{theorem}[Solutions with singularity in finite time with  $L^{\infty}$ norm]
\label{Theo2}
 Consider the generalized coupled nonlinear Schr\"odinger system
 {\small 
\begin{eqnarray}
i\psi _{t} &=&-a(t)\psi_{xx} +(b\left(
t\right) x^{2}-id(t)-xf(t))\psi +i(g(t)-c\left( t\right) x)\psi
_{x}+h(t)(\left\vert \varphi \right\vert ^{2s}+\left\vert \psi \right\vert
^{2s})\psi,  \label{vcNLS3} \\
i\varphi_{t} &=&-a(t)\varphi_{xx} +(b\left(
t\right) x^{2}-id(t)-xf(t))\varphi +i(g(t)-c\left( t\right) x)\varphi
_{x}+h(t)(\left\vert \varphi \right\vert ^{2s}+\left\vert \psi \right\vert
^{2s})\varphi,  \label{vcNLS4}
\end{eqnarray}%
}
where $a(t),b(t),c(t),d(t),f(t),g(t),h(t)$ are time-dependent functions and $s>0.$ Then, the system (\ref{vcNLS3})-(\ref{vcNLS4}) admits the  $L^{\infty}$ singular solutions 
\begin{equation}
\psi(x,t)=\dfrac{1}{\sqrt{\mu(t)}}\exp\left[i\left(\alpha(t)x^2+\beta(t)xy+\gamma(t)y^2 +\delta(t)x+\varepsilon(t)y+\kappa(t)\right)\right], \label{SolExp1}
\end{equation}
\begin{equation}
\varphi(x,t)=\dfrac{1}{\sqrt{\mu(t)}}\exp\left[i\left(\alpha(t)x^2+\beta(t)xz+\gamma(t)z^2 +\delta(t)x+\varepsilon(t)z+\kappa(t)\right)\right], \label{SolExp2}
\end{equation}
where $y,z$ are real parameters and $\alpha(t),\beta(t),\gamma(t),\delta(t),\varepsilon(t),\kappa(t),\mu(t)$ satisfy the modified Riccati system (\ref{Rica1})-(\ref{Rica6}).
 
\end{theorem}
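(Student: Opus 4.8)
\emph{Proof proposal.} The plan is to verify directly that the quadratic-phase ansatz (\ref{SolExp1})-(\ref{SolExp2}) solves (\ref{vcNLS3})-(\ref{vcNLS4}) if and only if $\alpha,\beta,\gamma,\delta,\varepsilon,\kappa,\mu$ satisfy the modified Riccati system (\ref{Rica1})-(\ref{Rica6}), and then to read off the $L^{\infty}$ singularity from the behaviour of $\mu$. The decisive observation is that for the proposed solutions the modulus is purely temporal: $|\psi(x,t)|^{2s}=|\varphi(x,t)|^{2s}=\mu(t)^{-s}$, independent of $x$ (and of the parameters $y,z$). Hence the coupling term $h(t)(|\varphi|^{2s}+|\psi|^{2s})$ collapses to the time-dependent linear coefficient $2h(t)\mu(t)^{-s}$, the two equations decouple, and each of $\psi,\varphi$ is forced to solve a linear variable-coefficient Schr\"odinger equation with quadratic (in $x$) Hamiltonian plus this extra zeroth-order term.

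Next I would carry out the substitution. Fixing $y$, write the phase of $\psi$ as $\alpha(t)x^{2}+(\beta(t)y+\delta(t))x+(\gamma(t)y^{2}+\varepsilon(t)y+\kappa(t))$, compute $\psi_{t},\psi_{x},\psi_{xx}$, insert them into (\ref{vcNLS3}), and separate the coefficients of $x^{2},x^{1},x^{0}$ together with real and imaginary parts. This yields: from $x^{2}$ (real part) the Riccati equation (\ref{rica1}) for $\alpha$; from $x^{0}$ (imaginary part) the relation $\mu'/\mu=4a\alpha+2d$, i.e. the standard substitution (\ref{sus1}), which linearises the $\alpha$-equation into the characteristic equation (\ref{Carac1}) for $\mu$; from $x^{1}$ (real part) a linear first-order ODE for $\beta y+\delta$ which, on matching powers of $y$, splits into (\ref{rica2}) for $\beta$ and (\ref{rica4}) for $\delta$; and from $x^{0}$ (real part) an equation for $\gamma y^{2}+\varepsilon y+\kappa$ which, again matching powers of $y$, splits into (\ref{rica3}) (with $l_{0}=1$) for $\gamma$, (\ref{rica5}) for $\varepsilon$, and $\kappa'=g\delta-a\delta^{2}-2h\mu^{-s}$ for $\kappa$. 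This last equation is precisely (\ref{rica6}) corrected by the term $-2h(t)\mu(t)^{-s}$ coming from the nonlinearity, and this correction is what distinguishes the \emph{modified} Riccati system (\ref{Rica1})-(\ref{Rica6}) from (\ref{rica1})-(\ref{rica6}). Repeating the identical computation for $\varphi$ with $z$ in place of $y$ produces the very same equations for $\alpha,\beta,\gamma,\delta,\varepsilon,\kappa,\mu$, so the two components legitimately share these functions while retaining independent parameters $y\neq z$ — a freedom that is a direct consequence of the nonlinearity being $x$-independent. Since the modified system is explicitly solvable (the Appendix provides its multiparameter solution), this establishes that (\ref{SolExp1})-(\ref{SolExp2}) are solutions for arbitrary coefficients $a,b,c,d,f,g,h$ and arbitrary $s>0$, $y,z\in\mathbb{R}$.

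Finally I would extract the singularity. Every candidate solution satisfies $\|\psi(\cdot,t)\|_{L^{\infty}}=\|\varphi(\cdot,t)\|_{L^{\infty}}=\mu(t)^{-1/2}$, so an $L^{\infty}$ blow-up occurs exactly at a zero of $\mu$. Because $\mu$ is governed by the second-order linear ODE (\ref{Carac1}), one can pick coefficients so that $\mu$ vanishes at a finite time while the coefficients themselves stay regular: e.g. $a\equiv 1$, $c=d=f=g\equiv 0$, $b\equiv 1/4$ reduces (\ref{Carac1}) to $\mu''+\mu=0$, and $\mu(t)=\cos t$ (with $h$ arbitrary, influencing only $\kappa$) gives solutions (\ref{SolExp1})-(\ref{SolExp2}) that are smooth on $[0,\pi/2)$ and blow up in $L^{\infty}$ as $t\to\pi/2^{-}$; more generally the first zero $T$ of the $\mu$ furnished by the Appendix is the blow-up time. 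I do not expect a genuine obstacle here: once the $x$-independence of $|\psi|^{2s},|\varphi|^{2s}$ is noticed the argument is essentially bookkeeping. The only points requiring care are tracking that the nonlinear contribution enters the Riccati system \emph{only} through the $\kappa$-equation, and checking that the Appendix's solution of the modified system indeed admits parameter choices with $\mu(T)=0$ at finite $T$ while keeping the PDE coefficients regular on $[0,T]$.
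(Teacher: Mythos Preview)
Your proposal is correct and follows essentially the same approach as the paper: substitute the quadratic-phase ansatz, observe that $|\psi|^{2s}=|\varphi|^{2s}=\mu(t)^{-s}$ collapses the nonlinearity to a purely temporal term affecting only the $\kappa$-equation, derive the modified Riccati system (\ref{Rica1})--(\ref{Rica6}), linearise the $\alpha$-equation into the characteristic equation for $\mu$, and read off the $L^{\infty}$ blow-up from a zero of $\mu$. The only minor difference is in the blow-up step: the paper invokes the Appendix's multiparameter formula $\mu(t)=2\mu(0)\mu_{0}(t)(\alpha(0)+\gamma_{0}(t))$ and uses the nonvanishing Wronskian to argue that $\gamma_{0}$ is strictly monotone, hence $\mu$ vanishes at the finite time $T_{b}=\gamma_{0}^{-1}(-\alpha(0))$, whereas you exhibit an explicit example ($a\equiv1$, $b\equiv1/4$, $\mu=\cos t$) and then defer to the Appendix in general --- both routes establish the same conclusion.
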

\begin{proof}
 We will proceed as in \cite{CorderoSoto2008,Escorcia}. Substituting (\ref{SolExp1})-(\ref{SolExp2}) into equations (\ref{vcNLS3})-(\ref{vcNLS4}) we get the modified Riccati system:
 \begin{equation}
\dfrac{d\alpha }{dt}+b(t)+2c(t)\alpha +4a(t)\alpha^{2}=0,  \label{Rica1}
\end{equation}%
\begin{equation}
\dfrac{d\beta }{dt}+(c(t)+4a(t)\alpha(t))\beta =0,  \label{Rica2}
\end{equation}%
\begin{equation}
\dfrac{d\gamma }{dt}+a(t)\beta^{2}(t)=0,
\label{Rica3}
\end{equation}%
\begin{equation}
\dfrac{d\delta }{dt}+(c(t)+4a(t)\alpha(t))\delta = f(t)+2\alpha (t)g(t),
\label{Rica4}
\end{equation}%
\begin{equation}
\dfrac{d\varepsilon }{dt}=(g(t)-2a(t)\delta(t))\beta (t),  \label{Rica5}
\end{equation}%
\begin{equation}
\dfrac{d\kappa }{dt} = g(t)\delta (t)-a(t)\delta^{2}(t)-\dfrac{2h(t)}{\mu^{s}(t)}.  \label{Rica6}
\end{equation}
 Here the previous system (\ref{Rica1})-(\ref{Rica6})   differs from (\ref{rica1})-(\ref{rica6}) by the extra term   $-2h(t)/\mu^{s}(t)$ in equation (\ref{Rica6}). Now, considering the substitution 
 \begin{equation}
\alpha = \left(\dfrac{1}{4a(t)}\dfrac{\mu ^{\prime }(t)}{\mu (t)}-\dfrac{d(t)}{2a(t)%
}\right),  \label{Sus1}
\end{equation}%
it follows that the Riccati equation (\ref{Rica1}) becomes\ 
\begin{equation}
\mu ^{\prime \prime }-\eta(t)\mu ^{\prime }+4\sigma (t)\mu =0,
\label{carac1}
\end{equation}%
with\ 
\begin{equation}
\eta (t)=\frac{a^{\prime }}{a}-2c+4d,\hspace{1cm}\sigma (t)=ab-cd+d^{2}+%
\frac{d}{2}\left( \frac{a^{\prime }}{a}-\frac{d^{\prime }}{d}\right) .
\end{equation}
 Using Equations (\ref{mu})-(\ref{epsilon0}) in the Appendix with $l_0 = 1$, (\ref{Rica1})-(\ref{Rica5}) can be solved, but (\ref{Rica6})  must be solved separately.
 Let $I$ be  an interval such that $\mu_{0}(t) \not = 0$ for all $t\in I.$ Now, since the functions $\mu_{0}$ and $\mu_1$ are linearly independent on an interval $I^\prime$, we have
 
 \begin{equation}
  \gamma_{0}^{\prime}(t) = \frac{W\left[\mu_{0}(t),\mu_{1}(t) \right]}{2\mu_{0}^{2}(t)} \not = 0, \quad t\in I \cap I^{\prime},
 \end{equation}
and by $\mu(t)$ given by (\ref{mu}), see \cite{CorderoSoto2011}, functions (\ref{SolExp1})-(\ref{SolExp2})  will have a  finite time blowup  in the $L^{\infty}$ norm at $T_{b} = \gamma_{0}^{\prime}(-\alpha(0))\in I\cap I^{\prime}$, i.e., 
\begin{equation}
 |\psi(x,t)|, |\varphi(x,t)| \to \infty \quad  \textrm{as} \quad  t \to T_b.
\end{equation}
\end{proof}

\subsubsection{Case $a = 1/2,  \ b = c = d = f = g = 0,  $ and arbitrary $h$}
 Consider the coupled VCNLS system
 {\small 
\begin{eqnarray}
i\psi _{t} &=&-\frac{1}{2}\psi_{xx} +h(t)(\left\vert \varphi \right\vert ^{2s}+\left\vert \psi \right\vert
^{2s})\psi,  \label{example9a} \\
i\varphi_{t} &=&-\frac{1}{2}\varphi_{xx} +h(t)(\left\vert \varphi \right\vert ^{2s}+\left\vert \psi \right\vert
^{2s})\varphi.  \label{example9b}
\end{eqnarray}%
}
Then, according to  Theorem \ref{Theo2}, the system admits the explicit  solutions given by the Equations (\ref{SolExp1})-(\ref{SolExp2}) with the functions
\begin{eqnarray*}
 \alpha(t)= \frac{\alpha(0)}{1+2\alpha(0)t}, \quad \quad \quad    \beta(t)= \frac{\beta(0)}{1+2\alpha(0)t},\quad   \quad \quad  \gamma(t)= \gamma(0)-\frac{\beta^2(0)t}{2+4\alpha(0)t},  \\\
  \delta(t)=\frac{\delta(0)}{1+2\alpha(0)t},  \quad \quad \quad \kappa(t)=\kappa(0)+\frac{\delta^2(0) t}{4\alpha(0)(1+2\alpha(0)t)}-\frac{2}{\mu^{s}(0)}\int_{0}^{t}\frac{h(t^\prime)}{(1+2\alpha(0)t^\prime)^s}dt^\prime, \\\  \varepsilon(t)=\varepsilon(0)-\frac{\beta(0)\delta(0)t}{1+2\alpha(0)t}, \quad \quad \quad \quad \quad  \quad  \mu(t)= \mu(0)(1+2\alpha(0)t).
\end{eqnarray*}
Therefore, these solutions blow up in the $L^{\infty}$ norm at the instant $T_{b}= -\frac{1}{2\alpha(0)}$, provided $\alpha(0)<0.$ 

In this context, this last result allows us to extend the singular solutions established in Theorem 1 in \cite{Escorcia} (see also Example 1 in the same reference).



\section{An n-dimensional Generalized Coupled NLS System}
In Section 3, we established a relationship between the one-dimensional coupled VCNLS (\ref{vcNLS1})-(\ref{vcNLS2}) and the well-known Manakov equations (\ref{ManakovSystem1})-(\ref{ManakovSystem2}), provided the coefficients satisfy the Riccati system (\ref{rica1})-(\ref{rica6}).  In the subsequent Section 4 and by ``exploiting'' the potential of the Riccati system,  the construction of explicit singular solutions for (\ref{vcNLS1})-(\ref{vcNLS2}) was shown. In this section, we reveal that a similar Riccati system (see Equations (\ref{RM1})-(\ref{RM6})) allows us to link an n-dimensional coupled NLS system with a generalized Manakov system. It will be useful in constructing soliton  solutions for the high-dimensional system.

\begin{theorem}
The nonautonomous coupled NLS n-dimensional
\begin{equation}
i\psi _{t}=-a(t)\Delta \psi +b(t)|\boldsymbol{x}|^{2}\psi +\lambda a(t)\beta^{2}(t)\mu
^{s}(t)(|\varphi |^{2s}+|\psi |^{2s})\psi , \label{nDa}
\end{equation}%
\begin{equation}
i\varphi _{t}=-a(t)\Delta \varphi +b(t)|\boldsymbol{x}|^{2}\varphi + \lambda a(t)\beta
^{2}(t)\mu ^{s}(t)(|\varphi |^{2s}+|\psi |^{2s})\varphi , \label{nDb}
\end{equation}%
with  $t\in \mathbb{R}$   and $\boldsymbol{x}\in \mathbb{R}^n$ can be transformed into the standard coupled NLS%
\begin{equation}
iu_{\tau }= \Delta_{\boldsymbol{\xi}} u - \lambda (|v|^{2s}+|u|^{2s})u,
\end{equation}%
\begin{equation}
iv_{\tau }= \Delta_{\boldsymbol{\xi}} v-\lambda (\left\vert v\right\vert
^{2s}+|u|^{2s})v,
\end{equation}%
by the transformations\ 
\begin{equation}
\psi (\boldsymbol{x},t)=\dfrac{1}{\sqrt{\mu (t)}}e^{i(\sum_{i=1}^{n}\alpha
(t)x_{i}^{2}+\delta _{i}(t)x_{i}+\kappa _{i}(t))}u(\boldsymbol{\xi },\tau), \label{GNa}
\end{equation}%
\ 
\begin{equation}
\varphi (\boldsymbol{x},t)=\dfrac{1}{\sqrt{\mu (t)}}e^{i(\sum_{i=1}^{n}%
\alpha (t)x_{i}^{2}+\delta _{i}(t)x_{i}+\kappa _{i}(t))}v(\boldsymbol{%
\xi },\tau), \label{GNb} \qquad
\end{equation}
with $\xi _{i}=\beta (t)x_{i}+\varepsilon _{i}(t)$ for $i = 1,2,..., n$ and $\tau =\gamma (t).$ Additionally, the following Riccati system is satisfied 
\begin{equation}
\label{RM1}
\dfrac{d\alpha }{dt}+b(t)+4a(t)\alpha ^{2}= 0,
\end{equation}%
\begin{equation}
\dfrac{d\beta }{dt}+4a(t)\alpha (t)\beta =0,
\end{equation}%
\begin{equation}
\dfrac{d\gamma }{dt}+ a(t)\beta ^{2}(t)=0,
\end{equation}%
\begin{equation}
\dfrac{d\delta _{i}}{dt}+4a(t)\alpha (t)\delta _{i}= 0,
\end{equation}%
\begin{equation}
\dfrac{d\varepsilon _{i}}{dt} + 2a(t)\beta (t)\delta _{i}(t) = 0,
\end{equation}%
\begin{equation}
\dfrac{d\kappa _{i}}{dt} + a(t)\delta _{i}^{2}(t) = 0,
\label{RM6}
\end{equation}%
\ $i=1,...,n.$ Considering the  substitution\ 
\begin{equation}
\alpha =\dfrac{\mu ^{\prime }(t)}{4 n a(t) \mu (t)},
\end{equation}%
it follows that the Riccati equation (\ref{RM1}) becomes\ 
\begin{equation}
\mu ^{\prime \prime }-\frac{a^\prime}{a}\mu ^{\prime }+4nab\mu +\left(\frac{1-n}{n}\right)\frac{(\mu^{\prime})^2}{\mu} =0.
\end{equation}%
\end{theorem}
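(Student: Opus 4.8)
The plan is to substitute the ansatz (\ref{GNa})--(\ref{GNb}) directly into (\ref{nDa})--(\ref{nDb}) and match coefficients, proceeding exactly as in the proof of \thmref{Th1} but now carrying the $n$-fold sums coming from $|\boldsymbol{x}|^{2}$ and $\Delta$. Writing $S(\boldsymbol{x},t)=\sum_{i=1}^{n}\bigl(\alpha(t)x_i^{2}+\delta_i(t)x_i+\kappa_i(t)\bigr)$, $\xi_i=\beta(t)x_i+\varepsilon_i(t)$ and $\tau=\gamma(t)$, so that $\psi=\mu^{-1/2}e^{iS}u(\boldsymbol{\xi},\tau)$, the chain rule gives $\psi_t=\mu^{-1/2}e^{iS}\bigl[-\tfrac{\mu'}{2\mu}u+iS_t u+\sum_i(\beta'x_i+\varepsilon_i')u_{\xi_i}+\gamma'u_\tau\bigr]$ and $\psi_{x_i x_i}=\mu^{-1/2}e^{iS}\bigl[(2i\alpha-(2\alpha x_i+\delta_i)^{2})u+2i\beta(2\alpha x_i+\delta_i)u_{\xi_i}+\beta^{2}u_{\xi_i\xi_i}\bigr]$; summing the latter over $i$ produces $\Delta\psi$. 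The nonlinear term simplifies because $|\psi|^{2s}=\mu^{-s}|u|^{2s}$ and $|\varphi|^{2s}=\mu^{-s}|v|^{2s}$, so the coefficient $\lambda a\beta^{2}\mu^{s}$ collapses to $\lambda a\beta^{2}$ once a factor $\mu^{-1/2}$ is restored.

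Next I would insert these expressions into (\ref{nDa}), cancel the common factor $\mu^{-1/2}e^{iS}$, and sort the resulting identity by the type of $u$-term appearing: the $u_\tau$ term, the first-order terms $u_{\xi_i}$, the Laplacian $\Delta_{\boldsymbol{\xi}}u$, the cubic-type term, and the pure multiple of $u$, which is a polynomial in the $x_i$ of degree two. Forcing the $u_{\xi_i}$ coefficients to vanish and collecting powers of $x_i$ yields (\ref{RM2}) and (\ref{RM5}); forcing the pure-$u$ coefficient to vanish yields, from its imaginary part, $\alpha=\mu'/(4na\mu)$ (the stated substitution), and from its real part, after collecting powers of $x_i$, the equations (\ref{RM1}), (\ref{RM4}) and (\ref{RM6}). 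What is left over is $i\gamma'u_\tau=-a\beta^{2}\Delta_{\boldsymbol{\xi}}u+\lambda a\beta^{2}(|v|^{2s}+|u|^{2s})u$; dividing by $\gamma'$ and imposing $\gamma'=-a\beta^{2}$, that is (\ref{RM3}), turns this into $iu_\tau=\Delta_{\boldsymbol{\xi}}u-\lambda(|v|^{2s}+|u|^{2s})u$, the claimed standard coupled NLS; the equation for $v$ follows identically from (\ref{nDb}).

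Finally, to obtain the characteristic equation I would differentiate the relation $\alpha=\mu'/(4na\mu)$, substitute the resulting $\alpha'$ together with $\alpha^{2}=(\mu')^{2}/(16n^{2}a^{2}\mu^{2})$ into (\ref{RM1}), and multiply through by $4na\mu$. The term $-(\mu')^{2}/\mu$ coming from $\alpha'$ and the term $+(1/n)(\mu')^{2}/\mu$ coming from $4a\alpha^{2}$ combine to the coefficient $(1-n)/n$, producing $\mu''-\tfrac{a'}{a}\mu'+4nab\mu+\bigl(\tfrac{1-n}{n}\bigr)\tfrac{(\mu')^{2}}{\mu}=0$.

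The computation is essentially bookkeeping; the one point that deserves care is the consistency of the isotropic ansatz, namely that a single $\alpha(t)$, $\beta(t)$, $\gamma(t)$, $\mu(t)$ shared by all coordinates suffices while only $\delta_i$, $\varepsilon_i$, $\kappa_i$ carry an index. This is exactly what the isotropy of $b(t)|\boldsymbol{x}|^{2}=b(t)\sum_i x_i^{2}$ and of $\Delta$ guarantees: the degree-two and leading first-order balances are identical in every coordinate, so they force a common $\alpha$ and $\beta$, and since the phase $S$ contains no cross terms $x_ix_j$ with $i\neq j$, no further compatibility conditions arise. No separate solvability argument for $\mu$ is needed here, since the $\mu$-equation is the stated conclusion rather than something to be solved.
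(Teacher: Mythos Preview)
Your proposal is correct and follows the same approach the paper uses for \thmref{Th1}: direct substitution of the ansatz, then coefficient matching to extract the Riccati system. The paper actually gives no separate proof for this $n$-dimensional statement---the theorem is stated with the transformations, the Riccati system, and the characteristic equation for $\mu$ all built in, the computation being understood as parallel to the one-dimensional case---so your write-up supplies exactly the verification that the paper leaves implicit, including the careful tracking of the $n$-fold sums and the derivation of the $(1-n)/n$ coefficient in the $\mu$-equation.
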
  
In general, the study of n-dimensional coupled NLS systems is carried out by degenerating them into the one-dimensional coupled NLS, see for example \cite{Lan,Kannan} for the two-dimensional case, or directly dealing with  the high-dimensional system  as can be seen in \cite{Zhong-ZhouLan} for $n = 3$. In contrast, our transformation does not reduce the spatial dimension of the system, and then, to find soliton solutions for (\ref{nDa})-(\ref{nDb}) in terms of Proposition 1, we  used the ideas given in the works  \cite{Zhong-ZhouLan,Lan,Kannan}.

\subsubsection{Case $a = 2,  \ b = -2, \ h = -8$}
 The two-dimensional coupled NLS system
 {\small 
\begin{eqnarray}
i\psi _{t} &=&-2(\psi_{xx}+\psi_{yy}) -2(x^2 + y^2)\psi -8(\left\vert \varphi \right\vert ^{2}+\left\vert \psi \right\vert ^{2})\psi,  \label{example10a} \\
i\varphi_{t} &=&-2(\varphi_{xx}+\varphi_{yy}) -2(x^2 + y^2)\varphi -8(\left\vert \varphi \right\vert ^{2}+\left\vert \psi \right\vert ^{2})\varphi  \label{example10b}
\end{eqnarray}%
}
admits the DB soliton solution given by equations (\ref{GNa})-(\ref{GNb}) with
\begin{eqnarray*}
 \alpha(t)= \frac{1}{2}\tanh(4t) , \quad \quad \quad \quad  \quad   \beta(t)= \sech(4t) ,\quad   \quad \quad \quad  \gamma(t)= -\frac{1}{2}\tanh(4t) ,  \\\
  \delta_{i}(t)= \sech(4t),  \quad \quad \kappa_{i}(t)= -\frac{1}{2}\tanh(4t),  \quad \quad  \varepsilon_{i}(t) = -\tanh(4t) , \quad \quad  \mu(t)= \cosh^{2}(4t),
\end{eqnarray*}
and the functions \cite{Lan,Kannan}:
\begin{eqnarray}
    u(\xi_1,\xi_2,\tau) = \chi(\xi_1 + \xi_2, -2\tau),
\end{eqnarray}
\begin{eqnarray}
    v(\xi_1,\xi_2,\tau) = \phi(\xi_1 + \xi_2, -2\tau),
\end{eqnarray}

with $\chi$ and $\phi$ given by the Equations (\ref{Dark_Brighta})-(\ref{Dark_Brightb}).

\subsubsection{Case $a = 2,  \ b = -\frac{1+2t^2}{4}, \ h = -12e^{t^2}$}
 Consider the three-dimensional coupled NLS system
 {\small 
\begin{eqnarray}
i\psi _{t} &=&-2(\psi_{xx}+\psi_{yy} + \psi_{zz}) -\frac{1+2t^2}{4}(x^2 + y^2 + z^2)\psi -12e^{t^2}(\left\vert \varphi \right\vert ^{2}+\left\vert \psi \right\vert ^{2})\psi,  \label{example11a} \\
i\varphi_{t} &=&-2(\varphi_{xx}+\varphi_{yy} + \varphi_{zz}) -\frac{1+2t^2}{4}(x^2 + y^2 + z^2)\varphi -12e^{t^2}(\left\vert \varphi \right\vert ^{2}+\left\vert \psi \right\vert ^{2})\varphi.  \label{example11b}
\end{eqnarray}%
}

\noindent Then, such a system admits the type I Rogue wave solution given by the Equations (\ref{GNa})-(\ref{GNb}) with
\begin{eqnarray*}
 \alpha(t)= \frac{t}{4}, \quad \quad \quad \quad  \quad   \beta(t)= e^{-t^2} ,\quad   \quad \quad \quad  \gamma(t)= -\sqrt{\frac{\pi}{2}}\mbox{Erf}(\sqrt{2}t),  \\\
  \delta_{i}(t)= e^{-t^2},  \quad \quad \kappa_{i}(t)= -\sqrt{\frac{\pi}{2}}\mbox{Erf}(\sqrt{2}t),  \quad \quad  \varepsilon_{i}(t) = -\sqrt{2 \pi}\mbox{Erf}(\sqrt{2}t), \quad \quad  \mu(t)= e^{3t^2},
\end{eqnarray*}
and the functions
\begin{eqnarray}
    u(\xi_1,\xi_2,\xi_3,\tau) = \chi(\xi_1 + \xi_2 + \xi_3, -3\tau), \quad \quad 
    v(\xi_1,\xi_2,\xi_3,\tau) = \phi(\xi_1 + \xi_2 + \xi_3, -3\tau),
\end{eqnarray}
with $\chi$ and $\phi$ given by equations (\ref{RWIa})-(\ref{RWIb}).

\section{Conclusions and Final Remarks}
Through this current work, we proposed and showed the integrability of a generalized coupled  VCNLS  using the similarity transformation established in Theorem \ref{Th1} and Theorem 3. Here, the Riccati system played a crucial role in the explicit construction of soliton-type solutions, rogue waves, and singular solutions for these kinds of equations. The explicit construction of soliton and Rogue wave-type solutions for an n-dimensional coupled NLS system was shown as well. This work should motivate further analytical and numerical studies looking to clarify the connections between  the dynamics of
variable-coefficient NLS and the dynamics of Riccati systems. We're hoping this fresh approach to dealing with VCNLS systems will provide new insights into the physics of such models and aid in comprehending real-world applications.

\begin{acknowledgement}
We thank the anonymous reviewers for their valuable time spent reading our work.
\end{acknowledgement}

\section{Appendix: Solution of the Riccati System }

In this appendix, we provide a solution of the Riccati system (\ref{rica1})-(\ref{rica6}), and we point out that all the formulas involved in such a solution have been verified previously in \cite{Ko-su-su}. A solution of the Riccati system with multiparameters is given by the following expressions (with the respective
inclusion of the parameter $l_{0}$) \cite{CorderoSoto2008,Escorcia,Suazo009,Suslov12}:

\begin{equation}
\mu \left( t\right) =2\mu \left( 0\right) \mu _{0}\left( t\right) \left(
\alpha \left( 0\right) +\gamma _{0}\left( t\right) \right) ,  \label{mu}
\end{equation}%
\begin{equation}
\alpha \left( t\right) =\alpha _{0}\left( t\right) -\frac{\beta
_{0}^{2}\left( t\right) }{4\left( \alpha \left( 0\right) +\gamma _{0}\left(
t\right) \right) },  \label{alpha}
\end{equation}%
\begin{equation}
\beta \left( t\right) =-\frac{\beta \left( 0\right) \beta _{0}\left(
t\right) }{2\left( \alpha \left( 0\right) +\gamma _{0}\left( t\right)
\right) }=\frac{\beta \left( 0\right) \mu \left( 0\right) }{\mu \left(
t\right) }w\left( t\right) ,  \label{beta}
\end{equation}%
\begin{equation}
\gamma \left( t\right) =l_{0}\gamma \left( 0\right) -\frac{l_{0}\beta
^{2}\left( 0\right) }{4\left( \alpha \left( 0\right) +\gamma _{0}\left(
t\right) \right) },  \label{gamma}
\end{equation}%
\begin{equation}
\delta \left( t\right) =\delta _{0}\left( t\right) -\frac{\beta _{0}\left(
t\right) \left( \delta \left( 0\right) +\varepsilon _{0}\left( t\right)
\right) }{2\left( \alpha \left( 0\right) +\gamma _{0}\left( t\right) \right) 
},  \label{delta}
\end{equation}%
\begin{equation}
\varepsilon \left( t\right) =\varepsilon \left( 0\right) -\frac{\beta \left(
0\right) \left( \delta \left( 0\right) +\varepsilon _{0}\left( t\right)
\right) }{2\left( \alpha \left( 0\right) +\gamma _{0}\left( t\right) \right) 
},  \label{epsilon}
\end{equation}%
\begin{equation}
\kappa \left( t\right) =\kappa \left( 0\right) +\kappa _{0}\left( t\right) -%
\frac{\left( \delta \left( 0\right) +\varepsilon _{0}\left( t\right) \right)
^{2}}{4\left( \alpha \left( 0\right) +\gamma _{0}\left( t\right) \right) },
\label{kappa}
\end{equation}%
\ subject to the initial arbitrary conditions $\mu \left( 0\right) ,$ $%
\alpha \left( 0\right) ,$ $\beta \left( 0\right) \neq 0,$ $\gamma (0),$ $%
\delta (0),$ $\varepsilon (0)$ and $\kappa (0)$. $\alpha _{0}$, $\beta _{0}$%
, $\gamma _{0}$, $\delta _{0}$, $\varepsilon _{0}$ and $\kappa _{0}$ are
given explicitly by\ 
\begin{equation}
\alpha _{0}\left( t\right) =\frac{1}{4a\left( t\right) }\frac{\mu
_{0}^{\prime }\left( t\right) }{\mu _{0}\left( t\right) }-\frac{d\left(
t\right) }{2a\left( t\right) },  \label{alpha0}
\end{equation}%
\begin{equation}
\beta _{0}\left( t\right) =-\frac{w\left( t\right) }{\mu _{0}\left( t\right) 
},\quad w\left( t\right) =\exp \left( -\int_{0}^{t}\left( c\left( s\right)
-2d\left( s\right) \right) \ ds\right) ,  \label{beta0}
\end{equation}%
\begin{equation}
\gamma _{0}\left( t\right) =\frac{1}{2\mu _{1}\left( 0\right) }\frac{\mu
_{1}\left( t\right) }{\mu _{0}\left( t\right) }+\frac{d\left( 0\right) }{%
2a\left( 0\right) },  \label{gamma0}
\end{equation}%
\begin{equation}
\delta _{0}\left( t\right) =\frac{w\left( t\right) }{\mu _{0}\left( t\right) 
}\ \ \int_{0}^{t}\left[ \left( f\left( s\right) -\frac{d\left( s\right) }{%
a\left( s\right) }g\left( s\right) \right) \mu _{0}\left( s\right) +\frac{%
g\left( s\right) }{2a\left( s\right) }\mu _{0}^{\prime }\left( s\right) %
\right] \ \frac{ds}{w\left( s\right) },  \label{delta0}
\end{equation}%
\begin{eqnarray}
\varepsilon _{0}\left( t\right) &=&-\frac{2a\left( t\right) w\left( t\right) 
}{\mu _{0}^{\prime }\left( t\right) }\delta _{0}\left( t\right)
+8\int_{0}^{t}\frac{a\left( s\right) \sigma \left( s\right) w\left( s\right) 
}{\left( \mu _{0}^{\prime }\left( s\right) \right) ^{2}}\left( \mu
_{0}\left( s\right) \delta _{0}\left( s\right) \right) \ ds  \label{epsilon0}
\\
&&+2\int_{0}^{t}\frac{a\left( s\right) w\left( s\right) }{\mu _{0}^{\prime
}\left( s\right) }\left[ f\left( s\right) -\frac{d\left( s\right) }{a\left(
s\right) }g\left( s\right) \right] \ ds,  \notag
\end{eqnarray}%
\begin{eqnarray}
\kappa _{0}\left( t\right) &=&\frac{a\left( t\right) \mu _{0}\left( t\right) 
}{\mu _{0}^{\prime }\left( t\right) }\delta _{0}^{2}\left( t\right)
-4\int_{0}^{t}\frac{a\left( s\right) \sigma \left( s\right) }{\left( \mu
_{0}^{\prime }\left( s\right) \right) ^{2}}\left( \mu _{0}\left( s\right)
\delta _{0}\left( s\right) \right) ^{2}\ ds  \label{kappa0} \\
&&\quad -2\int_{0}^{t}\frac{a\left( s\right) }{\mu _{0}^{\prime }\left(
s\right) }\left( \mu _{0}\left( s\right) \delta _{0}\left( s\right) \right) %
\left[ f\left( s\right) -\frac{d\left( s\right) }{a\left( s\right) }g\left(
s\right) \right] \ ds,  \notag
\end{eqnarray}%
\ with $\delta _{0}\left( 0\right) =g_{0}\left( 0\right) /\left( 2a\left(
0\right) \right) ,$ $\varepsilon _{0}\left( 0\right) =-\delta _{0}\left(
0\right) ,$ $\kappa _{0}\left( 0\right) =0.$ Here $\mu _{0}$ and $\mu _{1}$
represent the fundamental solution of the characteristic equation subject to
the initial conditions $\mu _{0}(0)=0$, $\mu _{0}^{\prime }(0)=2a(0)\neq 0$
and $\mu _{1}(0)\neq 0$, $\mu _{1}^{\prime }(0)=0$.



\end{document}